\definecolor{green}{rgb}{0.1,0.1,0.1}
\newtheorem{theorem}{Theorem}
\begin{document}

\newtheorem{theo}{Theorem}[section]
\newtheorem{definition}[theo]{Definition}
\newtheorem{lem}[theo]{Lemma}
\newtheorem{prop}[theo]{Proposition}
\newtheorem{coro}[theo]{Corollary}
\newtheorem{exam}[theo]{Example}
\newtheorem{rema}[theo]{Remark}
\newtheorem{remark}[theo]{Remark}
\newtheorem{corollary}[theo]{Corollary}
\newtheorem{example}[theo]{Example}
\newtheorem{principle}[theo]{Principle}
\newcommand{\ninv}{\mathord{\sim}}
\newtheorem{axiom}[theo]{Axiom}

\title{Measure-theoretic approach to negative probabilities}
\author{ Elisa Monchietti$^{1}$, César Massri$^{2}$, Acacio de Barros$^{3}$ and Federico Holik$^{4}$}


\maketitle

\begin{center}
\begin{small}
1 - Universidad Nacional de Rosario.\\
2 - Instituto de Investigaciones Matem\'{a}ticas "Luis A. Santalo".\\
3 - School of Liberal Studies, San Francisco State University, 1900 Holloway Ave., San Francisco, California, USA.\\
4 - Instituto de Física La Plata (CONICET-UNLP), Calle 113 entre 64 y 64 S/N, 1900, La Plata, Buenos Aires, Argentina.
\end{small}
\end{center}

\begin{abstract}
In this work, we elaborate on a measure-theoretic approach to negative probabilities. We study a natural notion of contextuality measure and characterize its main properties. Then, we apply this measure to relevant examples of quantum physics. In particular, we study the role played by contextuality in quantum computing circuits.
\end{abstract}

\section{Introduction}

As reported by Peter Shor \footnote{The video of P. Shor can be found in the following link: \href{https://www.youtube.com/watch?v=6qD9XElTpCE}{https://www.youtube.com/watch?v=6qD9XElTpCE}}, R. P. Feynmann was very interested in \emph{negative probabilities}. He thought that, perhaps, they could be used to give a natural explanation to the violation of Bell inequalities by quantum systems. In a subsequent paper, Feynmann gave arguments for considering negative probabilities as an interesting option for handling different problems of modern physics \cite{feynman_negative_1987}. They also called the attention of P.A.M. Dirac \cite{Dirac-1942}.

Negative probabilities play indeed a key role in many areas of quantum physics \cite{Hillery-PhysicsReports}. The most important application is, perhaps, the use of the Wigner function \cite{wigner_quantum_1932} in quantum optics problems \cite{Glauber-Negative} (for example, in the problem of quantum state estimation \cite{QuantumTomography-DiscreteWigner} and the determination of quantum correlations and classicality of quantum states \cite{Kenfack_2004,Cormik-ClassicalityDiscreteWigner,Deleglise2008}). More recently, negative probabilities have gained much interest in quantum information theory \cite{Ferrie_2011}, especially, after the suggestion that quantum contextuality could be the reason behind the speed-up of quantum algorithms \cite{howard_contextuality_2014} (see also  \cite{Veitch_2012,Galvao-WignerSpeedUp,Booth_Contextuality_and_Negative,shahandeh2021,Mari_Eisert_2012}). Indeed, the conection between quantum contextuality, no-signal models and negative probabilities has been studied with great detail \cite{Abramsky_2011} (see also \cite{de_barros_measuring_2015,kujala2019measures} for the use of negative probabilities as contextuality measures). In general, one can say that negative probabilities are used to characterize different features of quantum mechanics \cite{Spekkens-NegativityandContextuality-PRL,Singer-Stulpe-PhaseSpaceRepresentations}, specially, the no-signal condition \cite{SimulatingNon-signaling}.

Here we elaborate on a previous work \cite{Indistinguihsability_and_Negative_Probabilities} and study with great detail a \emph{measure theoretic approach} to negative probabilities. The advantage of this approach is that it is based in measure theory allowing to include infinite dimensional models very naturally, and in a way which is very similar to that of Kolmogorov. As such, it is the natural generalization of Kolmogorov's approach, the main difference being that it incorporates the notion of \emph{measurement context} from the very beginning. Differently from previous approaches, it does not rely on any Hilbert space structure. Thus, it is very suited for studying generalized probabilistic theories and contextuality scenarios in an operational way. Contextuality is naturally represented as the non-existence of a global positive probability distribution. This can be used to define a contextuality measure that that can be applied to quantum physics and more general (no-signal) probabilistic models. After reviewing the generalities of the Wigner function in Section \ref{s:Review}, we delve into the details of the definition presented in \cite{Indistinguihsability_and_Negative_Probabilities} in Section \ref{s:Main_Definitions}. Then, we prove our main results, that are separated in two parts. First, we analyze the problems of existence and uniqueness of our measures for situations that are relevant in physics and general probabilistic models in Section \ref{s:Properties}. Then, we turn to some applications in physics in Section \ref{s:Applications}. In particular, we include an application of the measure theoretic negative probabilities to the study of quantum circuits in connection to quantum contextuality. Finally, in section \ref{s:Conclusions} we draw some conclusions.

\section{Wigner distribution}\label{s:Review}

In order to illustrate better the idea of what a negative probability is, in this section we review the Wigner distribution (we follow \cite{Peres_Book} and \cite{Hillery1997}). It can be written as:

\begin{equation} \label{wigner}
    W(\textbf{r},\textbf{p}) = \frac{1}{(2\pi)^3}\int \psi^{*}\left(\textbf{r}+\frac{\hbar}{2}\textbf{s}\right)
    \psi\left(\textbf{r}-\frac{\hbar}{2}\textbf{s}\right)
    e^{i\textbf{p}\cdot \textbf{s}} d^3\textbf{s},
\end{equation}

\noindent and it is a useful tool in quantum optics. When performing the usual integrations, we obtain the marginal distributions:

\begin{equation} \label{eq:marginal1}
    \int W(\textbf{r},\textbf{p}) d^3 \textbf{p} = |\psi(\textbf{r})|^2.
\end{equation}

\noindent and

\begin{equation} \label{eq:marginal2}
    \int W(\textbf{r},\textbf{p}) d^3 \textbf{r} = |\psi(\textbf{p})|^2.
\end{equation}

Wigner functions of orthogonal states satisfy

\begin{equation}
    \int W_1(\textbf{r},\textbf{p})W_2(\textbf{r},\textbf{p})d^3\textbf{r}d^3\textbf{p} = 0
\end{equation}

This shows that Wigner functions may occasionally be negative and cannot be interpreted as probability distributions (the term ``quasiprobablity'' is sometimes used). Nevertheless, Wigner functions may give a qualitative feeling of the approximate location of a quantum system in phase space. They are often used to visualize the dynamical behavior of quantum systems.

Wigner functions are normalized by $\int W(\textbf{r},\textbf{p})d^3\textbf{r}d^3\textbf{p} = 1$, but they cannot be arbitrarily narrow and high, since they must also satisfy

\begin{equation}
    \int [W(\textbf{q},\textbf{p})]^2d^3\textbf{q}d^3\textbf{p}\leq (2\pi\hbar)^{-N}
\end{equation}

\noindent where equality holds only for pure states.


Wigner considered properties which one would want such a distribution to satisfy and then he showed that the distribution given by equation \ref{wigner} was the only one which satisfied these properties. Some of the properties for a distribution function, $W(\textbf{r},\textbf{p})$, which were considered of special interest are
\begin{itemize}
    \item $W(\textbf{r},\textbf{p})$ should be a Hermitian form of the state vector $\psi(\textbf{r})$, i.e. W is given by
    \begin{equation}
        W(\textbf{r},\textbf{p}) = \bra{\psi} M(\textbf{r},\textbf{p}) \ket{\psi}
    \end{equation}
    where $M(\textbf{r},\textbf{p})$ is a self-adjoint operator. Therefore, $W(\textbf{r},\textbf{p})$ is real.
    \item Besides equations \ref{eq:marginal1} and \ref{eq:marginal2}, it must satisfy the normalization relation mentioned before.
    \item $W(\textbf{r},\textbf{p})$ should be Galilei invariant.
    \item W(\textbf{r},\textbf{p}) should be invariant with respect to space and time reflections.
    \item If $W_{\psi}(\textbf{r},\textbf{p})$ and $W_{\phi}(\textbf{r},\textbf{p})$ are the distributions corresponding to the states $\psi(\textbf{r})$ and $\phi(\textbf{r})$ respectively then
    \begin{equation}
        \left | \int \psi^*(\textbf{r})\phi(\textbf{r}) d^3\textbf{r} \right |^2 = (2\pi\hbar)^3\int d^3\textbf{r}\int W_{\psi}(\textbf{r},\textbf{p})W_{\phi}(\textbf{r},\textbf{p})d^3\textbf{p}
    \end{equation}
    \item Taking into account the Fourier transform $\phi(\textbf{p})$ of the wave function $\psi(\textbf{r})$, equation \ref{wigner} can be re-written in the form
    \begin{equation}
        W(\textbf{r}, \textbf{p}) = \frac{1}{(2\pi)^3}\int \psi^{*}\left(\textbf{p}+\frac{\hbar}{2}\textbf{t}\right)
    \psi\left(\textbf{p}-\frac{\hbar}{2}\textbf{t}\right)
    e^{i\textbf{r}\cdot \textbf{t}} d^3\textbf{t},
    \end{equation}
    exhibiting the basic symmetry under the interchange $q \leftrightarrow  p$.
\end{itemize}
Notice that the very definition of Wigner function given by Eqn. \ref{wigner} relies on a map that takes quantum states as inputs. In what follows, we focus on an approach that is completely independent of the Hilbert space model for quantum theory.

\section{Measure theory and standard probabilities}

Given an outcome set $\Omega$ and a $\sigma-$algebra of subsets of it, $\mathcal{F}$, it is possible to define the probability $p$ as a non-negative real-valued function $p:\mathcal{F}\rightarrow [0,1]$ satisfying the following properties.
\begin{description}
\item[K1.] $p(\Omega)=1$
\item[K2.] For every denumerable and disjoint family $\{A_{i}\}_{i\in\mathbb{N}}$, $p(\bigcup A_{i})=\sum_{i} p(A_{i})$.
\end{description}

\noindent The above equations are called Kolmogorov's axioms. A triplet $(\Omega,\mathcal{F},p)$ is called a \textit{probability space}. An important definition in what follows is that of a \textit{random variable}:

\begin{definition}
\label{def:random-variables}
Let $(\Omega,\mathcal{F},p)$ be a probability space, and let $(\mathbb{R},\mathcal{B})$ be a Borel space with elements of $\mathbb{R}$ being real numbers. A (real-valued) \emph{random variable} $\mathbf{f}$ is a measurable function $\mathbf{f}:\Omega \rightarrow \mathbb{R}$, i.e. for all $B\in \mathcal{B}$, $\mathbf{f}^{-1}(B)\in \mathcal{F}$.
\end{definition}

Random variables express in a technical way the idea of observables in classical probabilistic theories. As an example, consider a classical one-dimensional Harmonic oscillator. The energy -- expressed by the formula $H=\frac{p^{2}}{2m}+\omega^{2}x^{2}$ -- is a function of position and momentum. Each possible value of energy, say, $H = \epsilon_{0}$, can be represented by all possible states in the space $\Gamma = \{(p,q)\,|\,p,q\in\mathbb{R}\}$ that satisfy that condition. Consider the set $H^{-1}([\epsilon_{0},\epsilon_{0}+\delta])=\{(p,q)\,|\,\epsilon_{0}\leq H(p,q)\leq\epsilon_{0}+\delta\}$. Assume that the system is represented by a probabilistic state $p:\Gamma\longrightarrow[0,1]$ (which is a probability density). The probability that the system has energy between $\epsilon_{0}$ and $\epsilon_{0}+\delta$ is then given by $\int_{H^{-1}([\epsilon_{0},\epsilon_{0}+\delta])} p d\lambda $ (where $\lambda$ is the Lebesgue measure in $\Gamma$). Thus, the mathematical concept captured by the notion of random variable is that of a function whose pre-image on any real interval gives place to a measurable set (i.e., a set with a well defined probability).

In order to introduce negative probabilities in physics without appealing to the Hilbert space formalism, one could try by simply extending Kolmogorov's axioms to signed measures in a very direct way:

\begin{definition}
\label{def:signed-measure}
Let $\Omega$ be a sample space and $\mathcal{F}$ a $\sigma$-algebra over $\Omega$. A \emph{signed measure} is a function $\mu : \Sigma\rightarrow \mathbb{R}$ such that
\begin{equation}
    \mu (\emptyset)= 0
\end{equation}
and for every denumerable and disjoint family $\{A_{i}\}_{i\in\mathbb{N}}$
\begin{equation}
    \mu (\bigcup_{i} A_{i})= \sum_{i}\mu (A_{i})
\end{equation}
The triple $(\Omega,\Sigma,\mu)$ is called a \emph{signed measure space} \cite{halmos_measure_1974}.
\end{definition}

But it turns out the above definition of signed measure space is too general for doing physics, since in addition, we need to complement it with a more specific notion of \textit{measurement context}. We address this problem in the following section.

\section{Negative probabilities and measurement contexts}\label{s:Main_Definitions}

In this section we review the signed probabilities introduced in \cite{Indistinguihsability_and_Negative_Probabilities}, with some important modifications. The main features of the signed measures used in this work are:

\begin{itemize}
    \item They are straightforward extension of Kolmogorov's theory and are based in measure theory. They can be used to describe infinite dimensional and non-discrete models as well.
    \item They incorporate the notion of \textit{measurement context} from the very beginning.
    \item They do not rely on the quantum mechanical formalism. They can be computed out of measurement statistics defined in a purely operational way.
\end{itemize}

\subsection{Signed probabilities as signed measure spaces endowed with Kolmogorovian subspaces}\label{s:Properties}

Here we provide a definition of negative probabilities using only measure theoretic notions, which is a simple generalization of Kolmogorov's framework. The key idea is that we start with a signed measure space $(\Omega,\Sigma,\mu)$, which is normalized to unity. Contexts, if they exist, are represented by subspaces $(\Omega,\Sigma_{k},\mu|_{\Sigma_{k}})$ which are Kolmogorovian. The $\Sigma_{k}$'s are taken to be sub $\sigma$-algebras of $\Sigma$. Here $\Sigma$ will be formed by subsets of $\Omega$. Therefore, that applies to the elements of $\Sigma_{k}$ too (for all $k$). In a more general formulation, one could use a more general notion of $\sigma$-algebra (i.e., not based on subsets), but here, we will remain close to the standard one based in subsets of $\Omega$. The intuitive idea behind these choices is that one works with special subsets of $\mathcal{P}(\Omega)$ (for example, the Borel sets), in order to avoid patologic examples (such as the Vitali set).  The first definition that we present in this section goes in the same line as that of Definition 8 in reference \cite{deBarros-Oas-Suppes}. In the reminder of this section, we will give alternative definitions, which illustrate other features of negative probabilities.

In what follows, let $I$ be an arbitrary collection of indexes (not necessarily denumerable). We then define:

\begin{definition}
\label{def:SignedProbaBoolean_Algebras}
Let $(\Omega,\Sigma,\mu)$ be a signed measure space. The triplet $(\Omega,\Sigma,\mu)$ is a \emph{negative probability space}, if it is endowed with a non-empty set of subspaces $(\Omega,\Sigma_{k})$, with $\{\Sigma_{k}\}_{k\in I}\subseteq \Sigma$, such that $\mu(\Omega)=1$ and $(\Omega,\Sigma_{k},\mu|_{\Sigma_{k}})$ is a Kolmogorovian probability space for all $k\in I$.
\end{definition}

Notice that, by construction, $\mu|_{\Sigma_{i}}(E)=\mu|_{\Sigma_{j}}(E)$, whenever $E\in \Sigma_{i}\cap\Sigma_{j}$. This grants that we are working with no-signal models. Notice also that we demand that  $\Sigma_{k}\subseteq\Sigma$, and then, its elements are ultimately elements of $\mathcal{P}(\Omega)$ (given that $\Sigma\subseteq\mathcal{P}(\Omega)$). In general, a physical system will have many physically different states, and then, we give the following definition:

\begin{definition}
Let $(\Omega,\Sigma)$ be a measurable space and let $(\Omega,\Sigma_{k})$ for $k\in I$ a collection of subspaces. A negative probability model will be determined by a non-empty set of signed measures $\mathcal{C}$ such that every $\mu\in \mathcal{C}$ is a negative probability with regard to $(\Omega,\Sigma)$ and $(\Omega,\Sigma_{k})$.
\end{definition}

Notice that, in the above definition, all possible measures $\mu\in\mathcal{C}$ have the same family of contexts. This reflects what happens in many relevant probabilistic theories, such as quantum mechanics and its no-signal generalizations.

As an example, consider the Wigner transformation of a quantum state. The quantum state is now represented by a signed measure in phase space. Since the possible values of $p$ and $q$ are both $\mathbb{R}$, we have that $\Omega=\mathbb{R}\times\mathbb{R}$, and $\Sigma$ can be taken to be the Borel subsets of $\Omega$. Of course, for an arbitrary quantum state $\rho$, and a Borel set $\Delta\in\Sigma$, its phase space representative might take a negative value, i.e. $\mu_{\rho}(\Delta)< 0$. But marginals must be positive, and then, propositions related to $p$ or $q$ only, must have positive values. An event such as ``the value of $q$ lies in the interval $\Delta$" ($\Delta\in\mathcal{B}(\mathbb{R}$) a Borel subset of $\mathbb{R}$), can be reprsented as a subset of $\Omega$ as $\Delta\times \mathbb{R}$. As such, it is a measurable set, and it is obvious that all elements of that form (i.e., $\Delta\times\mathbb{R}$, with $\Delta\in\mathbb{R}=\Omega_{1}$), for a sub $\sigma$-algebra $\Sigma_{q}$ of $\Sigma$. As is well known, $\mu_{\rho}|_{\Sigma_{p}}$ can only take positive values. A similar consideration applies to $\Sigma_{p}$ and $\mu_{\rho}|_{\Sigma_{p}}$. Thus, we see that the wigner quasiprobability distribution Satisfies definition \ref{def:SignedProbaBoolean_Algebras} in a very direct way. The relevance of our reformulation of the definition of negative probabilities, is that we no longer rely on the quantum formalism, allowing for more general probabilistic models. We also show that it suggests a very natural definition of contextuality measure (which can be computed in many examples of interest).

In the following sections we show that, given an arbitrary family of random variables grouped in measurement contexts (described by Kolmogorov spaces $(\Omega_{k},\Sigma_{k},\mu_{k})$), under certain conditions, one can build -- in a canonical way -- a space $(\Omega,\Sigma, \mu)$ in such a way that definition \ref{def:SignedProbaBoolean_Algebras} is satisfied. This point is very important, because it allows to build a connection between actual experimental situations (which can always be ultimately described using collections of random variables) to the measure-theoretic framework described above. Selecting a group of observables and their outcome sets is therefore a natural starting point for many approaches to contextuality (see for example \cite{Abramsky_2011}).

\subsection{Signed probabilities starting with random variables}

In many experimental situations, the notion of random variable can be taken as primitive. Here we provide an alternative definition of negative probabilities that starts with random variables.

\begin{definition}
\label{def:extended-random-variables}
Let $(\Omega,\Sigma,\mu)$ be a signed measure space, and let $(\mathbb{R},\mathcal{B})$ be a Borel space with elements of $\mathbb{R}$ being real numbers, i.e. $\mathcal{B}$ is a $\sigma$-algebra over $\mathbb{R}$. A (real-valued) \emph{extended random variable} $N$ is a measurable function $N:\Omega \rightarrow \mathbb{R}$.
\end{definition}

Since \textit{measurement contexts} are key to our approach we define:

\begin{definition}
\label{def:NEWcontext}
Let $\{N_i\}$, $i=1,\dots, n$, be a collection of extended random variables defined on a signed measure space $(\Omega,\Sigma,\mu)$. A \emph{ $\mu$-induced context} is a subset  $C^{\mu}_j=\{N_{k}\}_{k\in I_j}$, $I_j \subset \{ 1,\ldots, n\}$, for which there exists a sub-$\sigma$-algebra $\Sigma_{j}$ of $\Sigma$ such that, by defining $p^{\mu}_{j}(E):=\mu(E)$ for all $E\in\Sigma_{j}$, the triad $(\Omega,\Sigma_{j},p^{\mu}_{j})$ becomes a probability space, and $N_{i_{k}}$ is a random variable with respect to it, for all $k\in \{1,...,n_{j}\}$.
\end{definition}

Intuitively, a measurement context of a signed measure space, is a collection of -- possibly negative -- random variables for which, the global measure restricted to the Boolean subalgebra associated to those random variables is a Kolmogorovian one.

Given a base Boolean algebra, it is useful to define a family of signed measures over it:

\begin{definition}
Let $\Omega$ be a set and $\Sigma$ a $\sigma$-algebra of subsets of $\Omega$. A \emph{family of signed probabilistic models for $(\Omega,\Sigma)$} is a collection $\mathcal{S}_{(\Omega,\Sigma)}$ of signed measures on $(\Omega,\Sigma)$ such that, for all $\mu\in \mathcal{S}_{(\Omega,\Sigma)}$, $\mu(\Omega)=1$. Any $\mu\in \mathcal{S}_{(\Omega,\Sigma)}$ is called a \emph{state of the model}.
\end{definition}

Using that, we can define a notion of context in connection to a signed family of measures:

\begin{definition}
\label{def:GlobalContext}
Consider a family of signed probability models $\mathcal{S}_{(\Omega,\Sigma)}$. Let $\{N_i\}$, $i=1,\dots, n$, be a collection of  extended random variables defined on $\mathcal{S}_{(\Omega,\Sigma)}$. A \emph{general context} is a subset  $C_j=\{N_{k}\}_{k\in I_j}$, $I_j \subset \{ 1,\ldots, n\}$ of those extended random variables, for which there exists a sub-$\sigma$-algebra $\Sigma_{j}$ of $\Sigma$ satisfying that, for all $\mu\in\mathcal{S}$, by defining $p^{\mu}_{j}(E):=\mu(E)$ for all $E\in\Sigma_{j}$, the triad $(\Omega,\Sigma_{j},p^{\mu}_{j})$ becomes a probability space, and $N_{i_{k}}$ is a random variable with respect to it, for all $k\in \{1,...,n_{j}\}$.
\end{definition}

The above definition makes the notion of context robust with respect to a given family of measures.

Finally, we are now ready for providing a definition of negative probability:

\begin{definition}
\label{def:SignedProba}
A \emph{signed probability space}, also called here \emph{negative probability space}, is a signed measure space $(\Omega,\Sigma,\mu)$ endowed with a non-empty set of contexts $C=\{C^{\mu}_{j}\}$ (in the sense of Definition \ref{def:NEWcontext}), such that $\mu(\Omega)=1$. The measure $\mu$ in this space is a \emph{signed probability} or \emph{negative probability}.
\end{definition}

Notice that we can strengthen the above definition by defining a family of signed probability models, and by making the contexts robust with regard to that family. Also, it could be possible that one is interested in defining a notion of context with regard to a special family of signed measures.

It should be clear that Definition \ref{def:SignedProba} is general enough to cover many relevant examples in physics and statistics. In particular, it is well suited for describing quantum systems and non-signal probabilistic theories in general.

\subsection{Categorical viewpoint}

In this short subsection we review some relevant constructions from measure theory and category theory.
Let us first recall some definitions from measure theory (see \cite[\S 7]{schiling-measure}).
%
%
%

If $(X,\mathcal{A},\mu)$ is a measure space and $f:X\to Y$ is a measurable function,
then the \emph{image measure} (or the \emph{push-forward measure}) is a measure on $(Y,\mathcal{B})$
denoted $f_*(\mu)$ and defined as $f_*(\mu)(B):=\mu(f^{-1}(B))$.
An interesting example of image measure is the \emph{marginals} defined over the product space $X\times Y$.
Recall
that the $\sigma$-algebra on $X\times Y$, often denoted as $\mathcal{A}\otimes\mathcal{B}$,
is the $\sigma$-algebra generated by $\mathcal{A}$ and $\mathcal{B}$. Given a measure $\nu$ on $X\times Y$,
we can define a measure on $X$ by taking the image measure under the projection $\pi_1$.
Then, we define a measure over $X$ as $\pi_{1*}(\nu)(A)=\nu(A\times Y)$.



\

Now, let us give a stronger definition
of signed probability space by using some constructions from category theory.
Assume that we have a family of probability spaces $\mathcal{F}=\{(\Omega_i,\Sigma_i,p_i)\}_{i\in I}$.
A \emph{cone} for the family $\mathcal{F}$ is a measure space $(C,\Sigma,\nu)$ such that there exist measurable maps $c_i:C\to \Omega_i$
with $c_{i*}(\nu)=p_i$ for all $i$. A \emph{universal measure space} for the family $\mathcal{F}$ is a cone $U$ such that any other cone
factorizes through $U$,
\[
\xymatrix{
U\ar[r]^{u_i}&\Omega_i\\
C\ar[ur]_{c_i}\ar@{-->}[u]^{\exists! c}&
}
\]
In general, there exists no universal measure space (since, as we noticed there may be
non-isomorphic measure spaces satisfying the universal property). But, it is true
that there exists a unique (up to unique isomorphism) universal measurable space given by the product $U=\prod_i \Omega_i$ with
the  $\sigma$-algebra $\otimes_i \Sigma_i$.
\begin{definition}
Let $\mathcal{F}=\{(\Omega_i,\Sigma_i,p_i)\}_{i\in I}$ be a
family of probability spaces. A \emph{categorical signed probability space}
is a measure $\mu$ over $\Omega:=\prod \Omega_i$ with marginals $p_i$, that is, $\pi_{i*}(\mu) = p_i$.

If we are interested in measures $\mu$ that satisfies some set of equations $\mathcal{E}$,
we say that $(\prod\Omega_i,\otimes \Sigma_i,\mu)$ is a \emph{categorical signed probability space}
for the family $\mathcal{F}$ with constrains $\mathcal{E}$.
\end{definition}

\

Let us compare this definition with the definition of \emph{signed probability space}. Assume we have a collection
of random variables $N_i^j:\Omega_i\to\mathbb{R}$, where $(\Omega_i,\Sigma_i,p_i)$ is a probability space and
$j\in J$ is some index.
Then, over the product space $\Omega := \prod \Omega_i$, assume there exists a measure $\mu$ such that
$\pi_{i*}(\mu)=p_i$ (we may assume that $\mu$ satisfies some constrains).
With these assumptions, let us construct a signed probability space over $\Omega$.

First, $C_j:=\{N_i^j\circ \pi_i\}$ is a collection of random variables over $\Omega$.
Second, let $\Sigma_i'$ be the $\sigma$-subalgebra of $\Sigma$ defined as $\pi_i^{-1}(\Sigma_i)$.
Notice that any $E'\in \Sigma_i'$ is equal to $\pi_i^{-1}(E)$ for some $E\in\Sigma_i$, then
\[
\mu(E') = \mu(\pi_i^{-1}(E)) = p_i(E).
\]
Hence, $p_i^\mu$ as defined in Definition \ref{def:NEWcontext} is essentially equal to $p_i$.
Then, the  triad $(\Omega,\Sigma_i,p_i^\mu)$ becomes a probability space
and $N_i^j\circ \pi_i$ is a random variable with respect to it. This implies that
the categorical signed probability space
$(\Omega,\Sigma,\mu)$ is a signed probability space with the set of contexts $\{C_j\}$.

Now, is it possible to reverse this construction?
Given a signed probability space, is it possible to give it a structure of categorical signed probability space?
In general, the answer is no. Hence,
the definition of signed probability space given in Definition \ref{def:SignedProba} is more
general than the categorical one.

\section{Properties}\label{s:Properties}

In this section we discuss different mathematical properties of our definition of negative probabilities. We pay special attention to the problems of existence and uniqueness/non-uniqueness. We deal with the problem of finding a signed measurable space $(\Omega,\Sigma, \mu)$ for an arbitrary family of random variables grouped in measurement contexts (described by Kolmogorov spaces $(\Omega_{k},\Sigma_{k},p_{k})$).

Let $J$ and $I_{i}$ be index sets, that could be discrete or continuous/finite or infinite. Here we pose the following problem: given a family of collections of random variables $C_{i}=\{f_{i,j}\}_{j\in I_{i}}$ (for each $i\in J$), with associated Kolmogorov spaces $(\Omega_{i},\Sigma_{i},p_{i})$, under which conditions can we grant the existence of a signed probability -- as in Definition \ref{def:SignedProba} -- for which the $C_{i}$ are measurement contexts (as in Definition \ref{def:NEWcontext})? Before analyzing the general problem, we focus on a very simple example with three random variables.

\subsection{Three dichotomous random variables}

Let us illustrate with a concrete example (taken from \cite{Indistinguihsability_and_Negative_Probabilities}) how the existence problem works and why its solutions are not necessarily unique. Consider three dichotomic random variables $X$, $Y$ and $Z$. This means that they can take two values, say, $1$ and $-1$, and that they have assigned outcome spaces $\Omega_{X}=\{x,\bar{x}\}$, $\Omega_{Y}=\{y,\bar{y}\}$ and $\Omega_{Z}=\{z,\bar{z}\}$ (i.e., $1=x=y=z$ and $-1=\bar{x}=\bar{y}=\bar{z}$). These outcome spaces give place to three different Boolean algebras (given by the power sets of $\Omega_{X}$, $\Omega_{Y}$ and $\Omega_{Y}$: $\mathcal{P}(\Omega_{X})$, $\mathcal{P}(\Omega_{Y})$ and $\mathcal{P}(\Omega_{Z})$). In the context $X;Y$, we have an outcome set given by $\Omega_{X;Y}=\Omega_{X}\times\Omega_{Y}=\{(x,y),(\bar{x},y),(x,\bar{y}),(\bar{x},\bar{y})\}$. Similarly, we have $\Omega_{Y;Z}$ and $\Omega_{X;Z}$. Their $\sigma$-algebras are given by the power sets $\mathcal{P}(\Omega_{X;Y})$, $\mathcal{P}(\Omega_{X;Z})$ and $\mathcal{P}(\Omega_{Y;Z})$, respectively. Similarly, we can define a global outcome set $\Omega_{X;Y;Z}=\Omega_{X}\times\Omega_{Y}\times\Omega_{Z}=\{(x,y,z),(\bar{x},y,z),(x,\bar{y},z),(x,y,\bar{z}),(\bar{x},\bar{y},z),(x,\bar{y},\bar{z}),(\bar{x},y,\bar{z}),(\bar{x},\bar{y},\bar{z})\}$ and its associated $\sigma$-algebra $\mathcal{P}(\Omega_{X;Y;Z})$. Notice that both $\Omega_{X;Y;Z}$ and $\mathcal{P}(\Omega_{X;Y;Z})$ might be experimentally inaccessible, given that, as is well known, there are situations for which there exists no global Kolmogorovian probability assignment that reproduces all marginals and correlations for $X$, $Y$ and $Z$. In what follows, we assume that $X$, $Y$ and $Z$ are pairwise measurable, i.e., that there exist (Kolmogorovian) probability measures $p_{X;Y}$, $p_{Y;Z}$ and $p_{X;Z}$, defined over $\mathcal{P}(\Omega_{X;Y})$, $\mathcal{P}(\Omega_{X;Z})$ and $\mathcal{P}(\Omega_{Y;Z})$, respectively.

Notice that all the algebras $\mathcal{P}(\Omega_{X})$, $\mathcal{P}(\Omega_{Y})$, $\mathcal{P}(\Omega_{Z})$, $\mathcal{P}(\Omega_{X;Y})$, $\mathcal{P}(\Omega_{Y;Z})$ and $\mathcal{P}(\Omega_{X;Z})$, are canonically embedded in $\mathcal{P}(\Omega_{X;Y;Z})$. Let us explain this with some examples. Suppose that we consider the proposition ``$X$ has the value $x$". This is represented in $\mathcal{P}(\Omega_{X})$ by the singleton set $\{x\}$. But there exist also a representatives of the same proposition in $\mathcal{P}(\Omega_{X;Y})$ and $\mathcal{P}(\Omega_{X;Y;Z})$. They are given by $\{(x,y),(x,\bar{y})\}$ and $\{(x,y,z),(x,\bar{y},z),(x,y,\bar{z}),(x,\bar{y},\bar{z})\}$, respectively. Both sets have in common that their elements are formed by all possible values of $Y$ (or $Y$ and $Z$), but the value of $X$ is fixed to be $x$. Similarly, $$\{(\bar{x},y,z),(\bar{x},\bar{y},z),(\bar{x},y,\bar{z}),(\bar{x},\bar{y},\bar{z})\}$$
represents the proposition ``$X$ has value $\bar{x}$" in $\mathcal{P}(\Omega_{X;Y;Z})$, $\{(\bar{x},z),(\bar{x},\bar{z})\}$ represents the proposition ``$X$ has value $\bar{x}$" in $\mathcal{P}(\Omega_{X;Z})$, and so on. The reader can easily find representatives of any proposition about $X$, $Y$ and $Z$ in $\mathcal{P}(\Omega_{X;Y;Z})$. Notice also that any joint proposition about $X$ and $Y$ (or about $X$ and $Z$, or $Y$ and $Z$) has a representative in $\mathcal{P}(\Omega_{X;Y;Z})$. For example, ``$X$ has value $x$ and $Y$ has value $\bar{y}$" is represented by $\{(x,\bar{y},z),(x,\bar{y},\bar{z})\}$. For completeness, the top and bottom elements of $\mathcal{P}(\Omega_{X})$ are represented by the top and buttom elements of $\mathcal{P}(\Omega_{X;Y;Z})$.

But the representatives aren't just copies. They also preserve structure in a natural way. For example, the negation of ``$X$ has value $x$ and $Y$ has value $\bar{y}$", is represented by $\{(x,y,z),(x,y,\bar{z}),(\bar{x},y,z),(\bar{x},y,\bar{z}),(\bar{x},\bar{y},z),(\bar{x},\bar{y},\bar{z})\}$ (which is just the set theoretical complement of  $\{(x,\bar{y},z),(x,\bar{y},\bar{z})\}$). Similarly, the representatives preserve join and meet operations. Thus, we have that  $\mathcal{P}(\Omega_{X})$ and $\mathcal{P}(\Omega_{Y})$ are canonically embedded in $\mathcal{P}(\Omega_{X;Y})$ and $\mathcal{P}(\Omega_{X;Y;z})$ (and the same happens with $\mathcal{P}(\Omega_{y})$, $\mathcal{P}(\Omega_{z})$ and $\mathcal{P}(\Omega_{X;Y})$ with regard to $\mathcal{P}(\Omega_{X;Y;z})$, and so on). Clearly, $\mathcal{P}(\Omega_{X;Y;Z})$ is the minimal Boolean algebra containing all the relevant subalgebras for this example.

Now that we have constructed a global algebra, we have the following problem: search for a global probability assignment $\mu:\mathcal{P}(\Omega_{X;Y;Z})\longrightarrow[0,1]$, such that its marginals are coincident with the input probability distributions and their correlations. In order that the marginals are compatible, we reach the a set of linear equations to be made precise in the following.

The first constrain that we impose is normalization:

\begin{align}\label{e:NormaSigned}
\mu_{xyz}+\mu_{\bar{x}yz}+\mu_{x\bar{y}z}+\mu_{xy\bar{z}}+\mu_{x\bar{y}\bar{z}}+ \mu_{\bar{x}y\bar{z}}+\mu_{\bar{x}\bar{y}z}+\mu_{\bar{x}\bar{y}\bar{z}}=1
\end{align}

Alternatively, we have that $p_{X;Y}$, $p_{X;Z}$ and $p_{Y;Z}$:

\begin{subequations}
\begin{align}\label{e:normX;Y}
p_{X;Y}(xy)+p_{X;Y}(\bar{x}y)+p_{X;Y}(x\bar{y})+p_{X;Y}(\bar{x}\bar{y})=1
\end{align}
\begin{align}\label{e:normX;Z}
p_{X;Z}(xz)+p_{X;Z}(\bar{x}z)+p_{X;Z}(x\bar{z})+p_{X;Z}(\bar{x}\bar{z})=1
\end{align}
\begin{align}\label{e:normY;Z}
p_{Y;Z}(yz)+p_{Y;Z}(\bar{y}z)+p_{Y;Z}(y\bar{z})+p_{Y;Z}(\bar{y}\bar{z})=1
\end{align}
\end{subequations}

Next, we have the constraint imposed by mean values of $X$, $Y$ and $Z$:

\begin{subequations}\label{e:MeanValues}
\begin{align}\label{e:MeanX}
& \mu(xyz)-\mu(\bar{x}yz)+\mu(x\bar{y}z)+\mu(xy\bar{z})+\mu(x\bar{y}\bar{z})-\\\nonumber
& \mu(\bar{x}y\bar{z})-\mu(\bar{x}\bar{y}z)-\mu(\bar{x}\bar{y}\bar{z})=\langle X\rangle
\end{align}
\begin{align}\label{e:MeanY}
& \mu(xyz)+\mu(\bar{x}yz)-\mu(x\bar{y}z)+\mu(xy\bar{z})-\mu(x\bar{y}\bar{z})+\\\nonumber
& \mu(\bar{x}y\bar{z})-\mu(\bar{x}\bar{y}z)-\mu(\bar{x}\bar{y}\bar{z})=\langle Y\rangle
\end{align}
\begin{align}
& \mu(xyz)+\mu(\bar{x}yz)+\mu(x\bar{y}z)-\mu(xy\bar{z})-\mu(x\bar{y}\bar{z})-\\\nonumber
& -\mu(\bar{x}y\bar{z})+\mu(\bar{x}\bar{y}z)-\mu(\bar{x}\bar{y}\bar{z})=\langle Z\rangle
\end{align}
\end{subequations}

\noindent The contexts $X;Y$, $X;Z$ and $Y;Z$ impose the following constraints on $\mu$:

\begin{subequations}\label{e:Contexts}
\begin{align}
& \mu(xyz)-\mu(\bar{x}yz)-\mu(x\bar{y}z)+\mu(xy\bar{z})-\mu(x\bar{y}\bar{z})-\\\nonumber
& \mu(\bar{x}y\bar{z})+\mu(\bar{x}\bar{y}z)+\mu(\bar{x}\bar{y}\bar{z})=\langle XY\rangle
\end{align}
\begin{align}
& \mu(xyz)-\mu(\bar{x}yz)+\mu(x\bar{y}z)-\mu(xy\bar{z})-\mu(x\bar{y}\bar{z})+\\\nonumber
& \mu(\bar{x}y\bar{z})-\mu(\bar{x}\bar{y}z)+\mu(\bar{x}\bar{y}\bar{z})=\langle XZ\rangle
\end{align}
\begin{align}
& \mu(xyz)+\mu(\bar{x}yz)-\mu(x\bar{y}z)-\mu(xy\bar{z})+\mu(x\bar{y}\bar{z});\\\nonumber
& \mu(\bar{x}y\bar{z})-\mu(\bar{x}\bar{y}z)+\mu(\bar{x}\bar{y}\bar{z})=\langle YZ\rangle
\end{align}
\end{subequations}

The above equations can be easily solved, since they are linear. Depending on the input probabilities $p_{X;Y}$, $p_{X;Z}$ and $p_{Y;Z}$, the global solution could be negative, positive (i.e., classical), or not exist at all. If the model satisfies the generalized no-signal condition, a solution will always exist. Notice that we have to determine eight unknown quantities (i.e., the values of $\mu$ on the elements of $\Omega_{X;Y;Z}$), and we have seven equations for mean values and normalization. That yields a subdetermined set of equations. Thus, a key feature of the existence of observables for which there exists no joint probability distribution already appears in this simple model: the set of equations for the global measure is subdetermined in the minimal algebra containing all the contexts. A schematic diagram illustrating the relations among the three random variables and their associated algebras is depicted in Figure \ref{f:Three_dichotomic}.

\begin{figure}
\centering
\includegraphics[width=9.5cm]{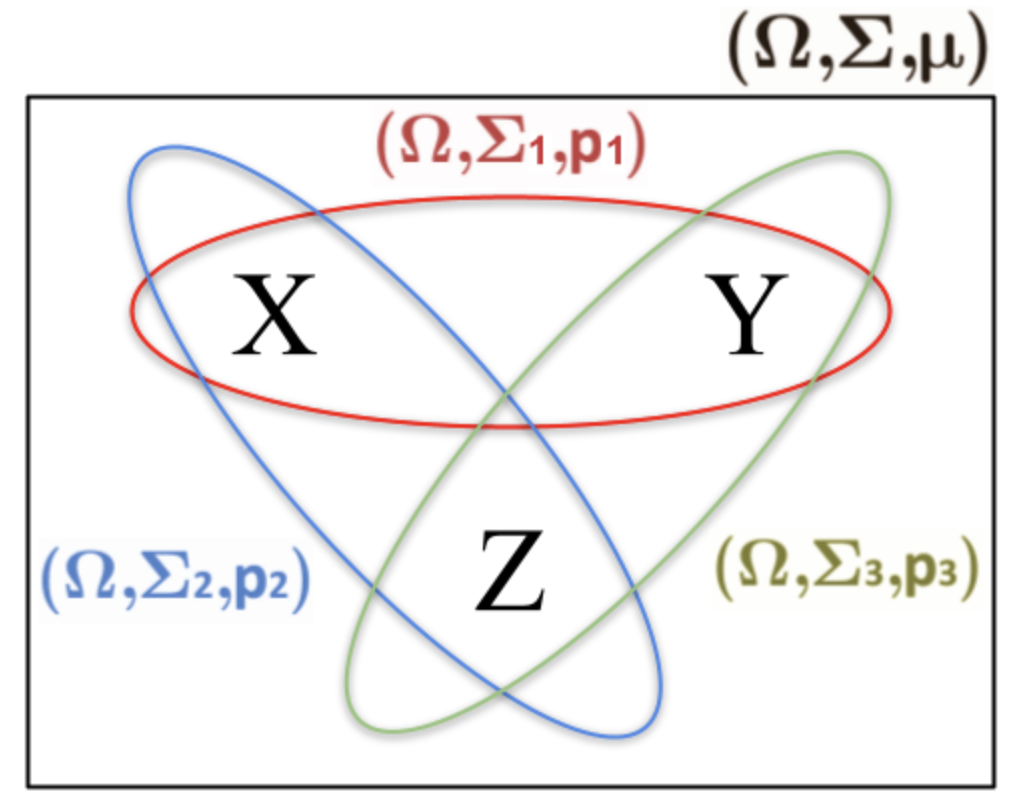}
\caption{The dichotomous random variables, their associated signed measure space and its subspaces. The measurement contexts formed by $XY$, $XZ$ and $YZ$, are illustrated with red, blue and green regions, respectively. The global measure is indicated in black.}
\label{f:Three_dichotomic}
\end{figure}

One can obtain eight equations by fixing the value of the empirically non accessible observable $XYZ$:

\begin{subequations}\label{e:Contexts}
\begin{align}
& \mu(xyz)-\mu(\bar{x}yz)-\mu(x\bar{y}z)-\mu(xy\bar{z})+\mu(x\bar{y}\bar{z})+\\\nonumber
& \mu(\bar{x}y\bar{z})+\mu(\bar{x}\bar{y}z)-\mu(\bar{x}\bar{y}\bar{z})=\langle XYZ\rangle
\end{align}
\end{subequations}

\noindent The above equation represents an experiment that cannot be accessed in our example (i.e., under the assumption that only $XY$, $XZ$ and $YZ$ can be jointly measured). But the input $\langle XYZ\rangle$ can be interpreted as a (non-empirical) parameter that fixes the negative probabilities of the hidden variables.

\subsection{Back to the general case}

More generally, we can pose the following problem. Given a family of pairwise incompatible contexts $C_{1}$, $C_{2}$,...., $C_{M}$, for which joint probability distributions $p_{1}$, $p_{2}$, ...., $p_{M}$ are assumed to exist, we want to know:

\begin{itemize}
  \item (a) The minimal outcome set $\Omega$ and Boolean algebra $\mathcal{B}(\Omega)$ containing the $\sigma$-algebras of each $C_{i}$ and their random variables as subalgebras.
  \item (b) A global (possibly negative) probability assignment $\mu$ satisfying Definition \ref{def:SignedProba}, which is compatible with the $\{p_{i}\}$.
\end{itemize}

To say that the contexts are pairwise incompatible means the following. For each pair of contexts, there will exist a combination of random variables taken from each context, for which the mean value of their product cannot be experimentally realized. In the example of the previous section, if we consider contexts $X-Y$ and $X-Z$, the mean value of $\langle XYZ\rangle$ is not experimentally available (here we took $X$ and $Y$ from context $X-Y$, and $Z$ from context $X-Z$). In a similar way as the example of the previous section, we reach a set of equations for $\mu$. Let us explicitly build the Boolean algebra and formulate the associated equations for a finite collection of random variables with finite outcomes each. Each context $C_{i}$ is built out of $N_{i}$ random variables $f_{ij}$. Given that, in the general formulation, the system might not fulfill the generalized no-signal condition, it is natural to use a notation $f_{ij}$ to describe the $j$-th random variable associated to context $i$. If $f_{ij'}$ and $f_{i'j'}$ (with $i\neq i'$) have the same content but are considered in different contexts, they should not be \textit{a priori} identified \cite{dzhafarov_contextuality_2014}. Each random variable $f_{ij}$, in turn, has associated a family of outcomes $o^{k}_{ij}$, with $k=1,...,\#f_{ij}$, being $\#f_{ij}$ the number of outputs of the random variable $f_{ij}$. Denote by $\Omega_{ij}$ to the outcome set of $f_{ij}$. In what follows, in some situations, we will use a double index $ij$ when we want to make explicit the dependence on the context (the $j$'th random variable of context $i$), and a single index $k$ when we want to refer to a random variable as is uniquely identified by its content (the $k$'st random variable of all possible random variables with a different content).

Let $\mathcal{V}$ be the set of all random variables \textit{having a different content}, and let $V$ be its cardinal. The index $k$, runs from $1$ to $V$: $\mathcal{V}=\{f_{1},f_{2},\cdots,f_{V}\}=\{f_{k}\}_{k=1}^{V}$. Accordingly, denote the outcome set of $f_{k}$ by $\Omega_{k}$. We now proceed to construct a global outcome set $\Omega$ that is formed by considering all possible value specifications for all random variables with a different content. Thus, proceeding similarly as in the example in the previous section, $\Omega$ is formed the Cartesian product of all possible outcomes sets:

\begin{equation}
\Omega = \Omega_{1}\times \Omega_{2}\times\ldots \times \Omega_{V}
\end{equation}

\noindent Each element of $\omega\in\Omega$ is a tuple of the form:

\begin{equation}
\omega = (o^{l_{1}}_{1},o^{l_{2}}_{2},\ldots ,o^{l_{V}}_{V})
\end{equation}

\noindent where the l's run over the number of outputs of each random variable (i.e., $1\leq l_{k}\leq \#f_{k}$). Notice that each $\omega \in \Omega$ is indexed by a list of values for each $f_{k}$. In other words, each $\omega \in \Omega$ specifies a concrete value for each random variable considered.

The minimal Boolean algebra associated to $\Omega$ is $\mathcal{B}=\mathcal{P}(\Omega)$. In what follows, we use a collective variable $w$ to denote each element of $\Omega$. We must now impose several conditions on $\mu$. These are given by the normalization, the mean values of the random variables, and the mean values of all possible $n$-ary products of random variables in all possible contexts, for $2\leq n \leq N_{i}$.

We must specify the conditions of the mean values and correlations. Assume that the context $i$ is formed by the random variables $f_{ij}$, where $1\leq j \leq N_{i}$. Now we have switched again to a notation that makes the dependence on the context explicit. But notice that, as a mathematical object, $f_{ij}$ is equal to an element $f_{k}\in\mathcal{k}$ for some $k$. In what follows, given an $\omega\in \Omega$, let $o_{ij}(\omega)$ be the value taken by $f_{ij}$ in that particular $\omega$. Similarly, $o_{ij}o_{ik}(\omega)$ means the product of the values of $f_{ij}$ and $f_{ik}$ in that $\omega$, and so on. Then, for each possible context $C_i$, and for all indexes $j,k,\ldots$ indexing the random variables, we must have (we include the normalization condition as the first equation below, for completeness and compactness):

\begin{align}\label{eq:General_Conditions_on_Mu}
  \sum_{\omega\in\Omega}\mu(\omega)&=1\\\nonumber
  \sum_{\omega\in\Omega}o_{ij}(\omega)\mu(\omega)&=\langle f_{ij}\rangle\\\nonumber
  \sum_{\omega\in\Omega}o_{ij}o_{ik}(\omega)\mu(\omega)&=\langle f_{ij}f_{ik}\rangle\\\nonumber
  \sum_{\omega\in\Omega}o_{ij}o_{ik}o_{il}(\omega)\mu(\omega)&=\langle f_{ij}f_{ik}f_{il}\rangle\\\nonumber
  &\vdots\\\nonumber
  \sum_{\omega\in\Omega}o_{i1}o_{i2}\ldots o_{iN_{1}}(\omega)\mu(\omega)&=\langle f_{i1}f_{i2}\ldots f_{iN_{i}}\rangle
\end{align}

The above equations are valid for any probabilistic system out of which we can collect its statistics (or at least, of which we can theoretically consider its statistics). The right hand side of equations \ref{eq:General_Conditions_on_Mu} are intended to be computed out of measured data. The left hand side depends on the of $\mu$ that we are looking for, which is determined by the unknown parameters $\{\mu(\omega)\}_{\omega\in\Omega}$. Notice that not all the information might be available: if the system is contextual, only the correlations for some particular subsets of observables will be available (i.e., will give place to realizable measurement contexts). Perhaps, for a particular system, we have, say, only the mean values of binary-products of random variables. Thus, depending on the input information, the solution might not exist, be unique (if a complete set of equations is obtained), or there might be infinitely many solutions.

In case that the model obeys the generalized no-signal condition, the number of equations that can be extracted out of the different contexts \ref{eq:General_Conditions_on_Mu} will be shorter than the number of unknown parameters needed to determine $\mu$. Let us quickly indicate why this is so. Recall that each context $C_{i}$ has $N_{i}$ random variables and that there are $M$ contexts, but a random variable may appear in more than one context. Now, a full identification between random variables with the same content is done, because we are assuming the generalized no-signal condition. Accordingly, we drop again the dependence on the context and use a single index $k$ to denote the random variables (and their associated sets). Since each random variable has $\# \Omega_{k}$ outcomes, we have that the number of elements in $\Omega$ is given by the product of the cardinalities of the outcomes sets of the $f_{k}$'s: $\# \Omega =\# \Omega_{1}\# \Omega_{2}\# \Omega_{3}\cdots \# \Omega_{V}$. The outcome set of a non-trivial random variable has at least two elements, and then $2\leq\# \Omega_{k}$ for all $k$. Thus, we have that $ 2^{V}\leq \# \Omega$. This means that, in order to find $\mu$, the number of unknown parameters is greater or equal than $2^{V}$ (recall that, in order to specify $\mu$, we must determine the parameters $\{\mu(\omega)\}_{\omega\in\Omega}$). Now, let us proceed to determine how many different equations can be extracted from all the possible contexts. If a context $C_{i}$ has $N_{i}$ random variables, it will yield $2^{N_{i}}-1$ different equations. This is so because we need to compute the mean values of the random variables it contains, the mean values of all possible pair products, all possible triplets, and so on. It turns out that there are as many of these mean values as subsets of random variables in context $C_{i}$ (minus one, given that the empty set will give place to no equation). Now, if we consider a new context $C_{j}$ (with $j\neq i$), it will yield $2^{N_{j}}-1$ different equations  again, but some of these equations might be repeated with regard to those of context $C_{i}$. The reason is that $C_{i}$ and $C_{j}$ might share some random variables. Thus, in order to determine an \textit{upper bound} on the number $E_{i,j}$ of different equations can be extracted from $C_{i}$ and $C_{j}$, we must analyze how many equations (for products of random variables) can be generated using $C_{i}\cup C_{j}$. It is crucial to realize that $E_{i,j}$ is strictly shorter than the number of different equations that can be formed using elements from $C_{i}\cup C_{j}$. The reason is that we are assuming that contexts are incompatible, and thus, there will be certain combinations of random variables that cannot be realized together in the same experiment. This means that there will exist at least one combination of random variables, taken from $C_{i}$ and $C_{j}$, for which the mean value of their product will not be empirically available  (in the example of the previous section, the mean value of the product of $X$, $Y$ and $Z$, was not empirically realizable). A similar result holds for three contexts $C_{i}$, $C_{j}$ and $C_{k}$, and so on, until we cover all possible contexts. It turns out that the number $E$ of different equations we can extract from the contexts is strictly shorter than the number of equations we can extract from $C = C_{1}\cup C_{2}\cup\ldots C_{M}$. Since $C = \mathcal{V}$, we have $E < 2^{V}-1$. Since the normalization condition adds a new equation, we conclude that the number $D$ of different equations we can extract from the contexts plus normalization condition satisfies $D = E+1 < 2^{V}$. Thus, the number of different equations is strictly shorter than the number of unknown parameters. Under these conditions, if one solution exists, infinitely many solutions will exist. It is important to remark that, in many cases, it will not be possible to find a positive solution\footnote{The problem of determining the conditions under which a \textit{non-negative} measure exists for a given family of random variables is a rather complicated subject. See for example \cite{Vorobev-1959}.}. In many cases of interest (as in quantum and quantum-like contextuality scenarios), we will find infinitely many (possibly signed) solutions.

It is instructive to revive the example of the previous section under the light of the above proof. In that example, each context has two random variables, and there are three random variables in total (with two outcomes each). The number of unknown parameters is given by $\#\Omega = 2^{3}=8$. Each context gives place to three different equations. For example, context $X-Y$ gives place to the mean values $\langle X \rangle$, $\langle Y \rangle$ and $\langle XY \rangle$. Context $X-Z$ gives place to the mean values $\langle X \rangle$, $\langle Z \rangle$ and $\langle XZ \rangle$. Thus, the mean value $\langle X \rangle$ is repeated. If we sum all the different equations from the three contexts, we obtain six equations ($\langle X \rangle$, $\langle Y \rangle$, $\langle Z \rangle$, $\langle XY \rangle$, $\langle XZ \rangle$ and $\langle YZ \rangle$), and we must add to them the normalization condition: seven equations in total. These are all compatible equations. The number of all possible equations (without taking into account incompatibility) is eight, since we are including the mean value $\langle XYZ \rangle$. Thus, we see that the number of different equations that can be extract from the contexts is strictly shorter than the number of all conceivable equations (i.e., disregarding the incompatibility condition). And the latter is always shorter or equal than the number of unknown parameters.

For the particular case of quantum systems, the mean values in the right hand side of equations \ref{eq:General_Conditions_on_Mu} can be expressed using the Born rule. Furthermore, all quantum observables have the same number of outputs. Denote the observable $j$ of context $i$ by $A_{ij}$, and its outcomes by $o_{ij}$. Thus, for a quantum system prepared in state $\rho$ we have:

\begin{align}\label{eq:General_Conditions_on_Mu_Quantum}
  \sum_{\omega\in\Omega}\mu(\omega)&=\mbox{tr}(\rho)\\\nonumber
  \sum_{\omega\in\Omega}o_{ij}(\omega)\mu(\omega)&=\mbox{tr}(\rho A_{ij})\\\nonumber
  \sum_{\omega\in\Omega}o_{ij}o_{ik}(\omega)\mu(\omega)&=\mbox{tr}(\rho A_{ij}A_{ik})\\\nonumber
  \sum_{\omega\in\Omega}o_{ij}o_{ik}o_{il}(\omega)\mu(\omega)&=\mbox{tr}(\rho A_{ij}A_{ik}A_{il})\\\nonumber
  &\vdots\\\nonumber
  \sum_{\omega\in\Omega}o_{i1}o_{i2}\ldots o_{iN_{1}}(\omega)\mu(\omega)&=\mbox{tr}(\rho A_{i1}A_{i2}\ldots A_{iN_{i}})
\end{align}

In the right hand side of equations \ref{eq:General_Conditions_on_Mu_Quantum}, only compatible observables (contained in a particular measurement context $C_{i}$) are considered. These can represent different parties, or refer to a single quantum system.

\subsection{Selecting a signed probability using the $L_{1}$-norm}\label{s:Uniqueness}

Given a finite dimensional quantum system (for example, a system of qubits in a quantum information devise), prepared in a definite state represented by a density operator $\rho$, there is a definite number of independent measurements that is enough for determining $\rho$ uniquely. For example, in a system of $N$ qubits, an arbitrary density operator is determined by performing at most $2^{2N}-1$ independent measurement statistics\footnote{This number can be optimized. See for example \cite{Holik_2023_Parametrizing}}. The statistics of all other measurement contexts are --so to say-- determined by those values, since they determine the density operator $\rho$ representing the physical state. But notice that, when representing the quantum state using signed measures, if we fix the number of available contexts, its associated minimal Boolean algebra will determine a number of unknown parameters which will be always less than the number of equations empirically available. It is not possible to solve this by adding measurement contexts, given that the number of unknown parameters will be increased. Therefore, we need to face a situation in which there will exist more than one signed measure which is compatible with the observed data. Which one should we choose? We must provide a rule for making a choice.




In what follows, it is important to build a geometric picture of the set of signed measures associated to a number of experimental mean values. Let $(X,\Sigma)$ be a measurable space (representing the global algebra assigned to a given collection of measurement contexts) and let $\mathcal{M}(X)$ be the Banach space
of signed measures over $X$, with finite total variation,
\[
\mathcal{M}(X) = \{\mu\,\colon\,\|\mu\|<\infty\},
\]
where $\|\mu\|=\sup\{\mu(A)-\mu(B)\,\colon\, A,B\in\Sigma\}$.

In $\mathcal{M}(X)$, let us call $L$ to the affine linear space given by a finite
number of linear equations. An example of such a linear equation is the condition $\mu(X)=1$ (notice that, due to normalization, this equation is always present in our approach). More generally, each mean value equation that we can consider in any context, will be represented by linear equations of that form. Thus, $L$ represents the collection of all signed measures which are compatible with our empirical model. As we explained above, the number of linear equations will be, in general, lower than that of unknown parameters. 

How many elements are in $L$? For an arbitrary empirical model, there will be more than one. Only when all possible mean values are available we have a unique solution, but the existence of observables which are not jointly measurable --as in quantum theory-- blocks this possibility. Thus, we are faced with the question: out of all possible elements in $L$, which one should we chose as representative of the empirical system under consideration? There are several options. Among them, one might consider the maximization of entropy or the optimization of any other quantity which has (a) nice mathematical properties and (b) is suitable for describing the physics of the problem. In what follows, we study some of these options. To do that, it is crucial to characterize the geometrical properties of $L$ with some detail, in connection with the quantity to be optimized.

Let us consider first what happens if we try to minimize $\|\mu\|$ (this is the strategy followed in \cite{Indistinguihsability_and_Negative_Probabilities}).
Given that $0\not\in L$ (because $\mu(X)=1$), for $\epsilon>0$ small enough, the convex set $C_\epsilon=\{\mu\in\mathcal{M}(X)\,\colon\,\|\mu\|=\epsilon\}$ is disjoint
from $L$ ($C_\epsilon$ is the ball of radius $\epsilon$). But, if $L\neq \emptyset$ and we chose $\epsilon>0$ big enough, then
$C_\epsilon\cap L\neq\emptyset$. Thus, let $\epsilon_0$ be defined as
\[
\epsilon_0:=\inf \{\epsilon>0\,\colon\,C_\epsilon\cap L\neq\emptyset\}
= \sup
\{\epsilon>0\,\colon\,C_\epsilon\cap L=\emptyset\}.
\]
In general, it can be proved that $C_{\epsilon}\cap L$ is convex. But it is not true that there exits some $\epsilon>0$
such that $C_{\epsilon}\cap L$ is a point. In other words, it is not true that
the norm attains a minimum value
over $L$ (see \cite[\S 5, Exercise 5]{rudin-measure}).

If $X=\mathbb{R}^n$ we can restrict $\mathcal{M}(\mathbb{R}^{n})$
to signed measures which are
absolutely continuous with respect to the Lebesgue measure $\lambda$,
\[
\mathcal{M}^{ac}(\mathbb{R}^{n}) = \{\mu\in\mathcal{M}(X)\,\colon\,\|\mu\|<\infty,\,\mu\ll\lambda\}.
\]
From Radon-Nikodym's theorem we have the following isometric isomorphism,
\[
\mathcal{M}^{ac}(\mathbb{R}^{n}) \cong L^1(\mathbb{R}^n).
\]
where the (inverse) map sends $f$ to the measure $\mu$ defined as
$\mu(A)=\int_A f d\lambda$.
Analogously, if $X$ is a discrete space, we have the following
isometric isomorphism,
\[
\mathcal{M}(X) \cong L^1(X).
\]
In this case, the condition of absolute continuity with respect to the counting measure is vacuous.
The (inverse) map sends the weights $(p_x)_{x\in X}$ to the measure $\mu$ defined as
$\mu(A)=\sum_{x\in A} p_x$.

The above properties of $\|\mu\|$ indicate that it is a reasonable candidate quantity out of which one can build a contextuality measure (see also the discussion in \cite{Indistinguihsability_and_Negative_Probabilities}). Thus, we give the following:

\begin{definition}
Let $X$ be a measurable space and let $\mu\in \mathcal{M}(X)$ be any element of $L$ of minimal norm (i.e., an element of norm $\epsilon_{0}$). Thus \emph{contextuality} of the empirical model is defined as $1-\|\mu\|$.
\end{definition}
Notice that in case $\mu$ is given by $\mu(A)=\int_A f d\lambda$ for some $f\in L^1(X)$,
the contextuality of $\mu$ is equal to $1-\|f\|_1$. Also, if $X$ is discrete,
then $\mu$ is determined by some weights $(p_x)_{x\in X}$ and
the contextuality of $\mu$ is equal to $1-\sum_{x\in X}|p_x|$ (compare with the construction presented in \cite{Indistinguihsability_and_Negative_Probabilities}). The following theorem is useful for our purposes in the rest of this work:

\begin{theorem}
If $X$ is finite and $L$ is transversal to the ball in $L^1(X)$, then there exists a unique signed measure
minimizing the contextuality value.
\end{theorem}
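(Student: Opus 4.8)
The plan is to transport the whole question into finite-dimensional convex geometry. Since $X$ is finite, write $n=\#X$; the isometry $\mathcal{M}(X)\cong L^{1}(X)$ recalled above identifies $\mathcal{M}(X)$ with $\mathbb{R}^{n}$ carrying the $\ell^{1}$-norm $\|\mu\|=\sum_{x\in X}|p_{x}|$, and the unit ball $B=\{\mu:\|\mu\|\le 1\}$ becomes the cross-polytope $\mathrm{conv}\{\pm e_{x}\}_{x\in X}$. The feasible set $L$ is a nonempty closed affine subspace, cut out by finitely many linear equations (always including $\mu(X)=1$). First I would show that the infimum $\epsilon_{0}=\inf\{\|\mu\|:\mu\in L\}$ is attained: the norm is continuous and coercive on $\mathbb{R}^{n}$, so for any $R>\epsilon_{0}$ the slice $\{\mu\in L:\|\mu\|\le R\}$ is closed and bounded, hence compact, and continuity of the norm yields a minimizer. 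This is precisely where finiteness of $X$ is essential, since in infinite dimensions the minimum may fail to exist (the pathology flagged earlier via the Rudin exercise).

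\textbf{The minimizer set is a boundary face.} Let $K=\{\mu\in L:\|\mu\|=\epsilon_{0}\}$ be the set of minimizers. By minimality $K=L\cap\epsilon_{0}B$, the intersection of an affine subspace with a convex polytope, so $K$ is a compact convex set; moreover $K\subseteq\partial(\epsilon_{0}B)$, since $L$ cannot meet the open ball $\epsilon_{0}\,\mathrm{int}(B)$ without contradicting the definition of $\epsilon_{0}$. Geometrically, enlarging the ball until it first touches $L$ exhibits $K$ as the contact locus against a supported face of the scaled cross-polytope.

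\textbf{Uniqueness via transversality.} Suppose, for contradiction, that $K$ contained two distinct points $\mu_{1}\neq\mu_{2}$. By convexity of $K$ the entire segment $[\mu_{1},\mu_{2}]$ lies in $K$: indeed $\|\cdot\|\le\epsilon_{0}$ on the segment by convexity of the norm and $\ge\epsilon_{0}$ by minimality, hence $=\epsilon_{0}$ throughout, so $[\mu_{1},\mu_{2}]\subseteq\partial(\epsilon_{0}B)$. Taking a supporting hyperplane of $\epsilon_{0}B$ at an interior point of the segment, convexity forces that hyperplane to contain the whole segment, which therefore lies in a proper, necessarily positive-dimensional, face $F$ of the cross-polytope. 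Its direction $v=\mu_{2}-\mu_{1}\neq 0$ then lies simultaneously in the direction space $L_{0}$ of $L$ and in the direction space of $\mathrm{aff}(F)$. This is exactly the configuration excluded by the hypothesis that $L$ is transversal to the ball, read as the condition that at the contact $L$ meets $\partial B$ in general position, so that $L_{0}$ contains no nonzero direction tangent to a proper face of the cross-polytope. The contradiction forces $K$ to be a single point, so the norm-minimizing signed measure --- the one determining the contextuality value $1-\epsilon_{0}$ --- is unique.

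\textbf{Main obstacle.} Existence is routine; the entire content is uniqueness, and the difficulty is that the $\ell^{1}$-ball is a polytope and therefore very far from strictly convex, its boundary containing genuine line segments. Consequently $\ell^{1}$-minimization over an affine subspace can legitimately return a whole continuum of minimizers whenever $L$ runs parallel to a face: minimizing $|p_{1}|+|p_{2}|$ subject to $p_{1}+p_{2}=1$, for instance, is realized by the entire edge from $(1,0)$ to $(0,1)$. The role of the transversality hypothesis is precisely to exclude this degeneracy, and the delicate step is to pin down ``transversal to the ball'' for the piecewise-linear boundary and to verify that this single condition collapses the a priori polytopal contact set $K$ all the way down to a point.
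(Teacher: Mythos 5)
Your proposal is correct, and it shares its skeleton with the paper's proof: both reduce via the isometry $\mathcal{M}(X)\cong L^{1}(X)\cong(\mathbb{R}^{n},\|\cdot\|_{1})$ and then appeal to the transversality hypothesis. But the paper's proof is \emph{only} that skeleton --- after the identification it asserts, in a single sentence, that the hypothesis yields a unique minimizing vector --- whereas you supply the two arguments it leaves implicit. First, existence: your compactness/coercivity argument is exactly what finiteness of $X$ buys, and it is genuinely needed, since (as the paper itself flags via the Rudin exercise) the minimum can fail to be attained in infinite dimensions. Second, uniqueness: your argument that two minimizers would span a segment inside $\partial(\epsilon_{0}B)$, hence inside a proper positive-dimensional face of the cross-polytope, hence would place a nonzero vector of $L_{0}$ in the direction space of that face, is the correct way --- essentially the only way --- to convert the hypothesis into uniqueness. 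In doing so you also commit to a precise reading of ``transversal to the ball,'' which the paper never defines: $L_{0}$ meets the direction space of any touched face only in $0$. That is the right reading; the stricter stratified-transversality reading ($L_{0}+\mathrm{dir}(F)=\mathbb{R}^{n}$ at every touched stratum $F$) would make the hypothesis unsatisfiable in the typical situation where the minimal-norm contact occurs at a vertex or low-dimensional face, rendering the theorem vacuous. Your closing example (minimizing $|p_{1}|+|p_{2}|$ subject to $p_{1}+p_{2}=1$) correctly isolates the degeneracy the hypothesis exists to exclude. In short: same approach, but your write-up proves what the paper's two-line proof merely asserts.
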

\begin{proof}
If $X$ is finite of cardinality $n$, $L^1(X)$ is isomorphic to $(\mathbb{R}^n,\|\cdot\|_1)$.
Hence, from the hypothesis on $L$,
there exists a unique vector $(p_x)_{x\in X}$ such that $\sum_{x\in X} |p_x|$ is minimum.
\end{proof}

\subsection{Entropic measures}

Usually, in quantum mechanics we have infinitely many different measuremnt contexts. Von Neumann's entropy can be defined as:

\begin{equation}
    S(\rho)= \mbox{tr}(\rho\ln{\rho})
\end{equation}

An equivalent definition is as follows. Given an orthonormal basis $B = \{|v_{i}\rangle\}_{i=1,..,n}$ of the Hilbert space, the entropy relative to that basis is given by:

\begin{equation}
    S_{B}(\rho)= \sum p_{i}\ln(p_{i})
\end{equation}

\noindent where $p_{i}=\mbox{tr}(\rho |v_{i}\rangle\langle v_{i}|)$. It can be proved that the von Neumann entropy satisfies:

\begin{equation}
    S(\rho)= \min_{B}(S_{B}(\rho))
\end{equation}

Here, we can make a similar move, and define the entropy associated to a negative probability as the infimum taken among all the Shannon entropies associated to the considered measurement context. Thus, assume that we are considering the family of contexts $\mathcal{B}=\{B_{i}\}_{i\in I}$. Thus, the entropy associated to the negative probability $\mu$ will be given by the formula:

\begin{equation}
    S_{\mathcal{B}}(\mu)= \min_{B\in\mathcal{B}}(S(\mu|_{B}))
\end{equation}

\noindent Notice that, for a quantum systems, if $\mathcal{B}$ is taken to be all possible measurement contexts (or if it includes the context that diagonalizes the density operator), the above definition coincides with the von Neumann entropy.

\section{Some Applications}\label{s:Applications}

There has been a growing interest in quantum contextuality, due to its possible connection with the performance of quantum computers. For that reason, quantifying quantum contextuality becomes of the essence. Several measures of contextuality has been developed for that aim (see for example \cite{Dzhafarov-Brief_Overview,Abramsky-Barbosa-Mansfield-2017,deBarros-Oas-Suppes,DEBARROS_2016}). Some of them have been compared, yielding similar results in several important examples \cite{deBarros-Dzhafarov-Kujala-Oas-2016}. Here, we focus in the $L_{1}$-norm already discussed in Section \ref{s:Uniqueness}. The reason is that it fits naturally with the negative probabilities approach that we are discussing here. Furthermore, it possesses the advantage of being easily implemented with a Python code. In what follows, we will analyze different quantum contextuality scenarios, and see how the $L_{1}$-norm behaves in them.



\subsection{Entanglement and contextuality scenarios}

In this section we analyze some examples of quantum states and contextuality scenarios. Notice that the computed value of the contextuality measure will depend, in general, of the chosen scenario. In particular, if we choose a system of two qubits and we consider a Bell-type setting, the contextuality will depend on the chosen angles for the observables. Therefore, for a given state, it is reasonable to choose those angles corresponding to its maximal violation of the CHSH inequality.

\noindent\textbf{Cat-like states.}

For two qubits, it is instructive to analyze states of the form:

\begin{equation}
|\psi\rangle =\sqrt{p}|00\rangle+\sqrt{1-p}|11\rangle
\end{equation}

\noindent with $p\in[0,1/2]$ (these are called \textit{Cat-like states}). In Figure \ref{fig:Context_Vio_Ent}, we show the entanglement entropy, contextuality and degree of violation for each state of this family (as $p$ ranges from $0$ to $1/2$). A histogram of the contextuality values obtained for this family is displayed in figure \ref{fig:Histogram_cat_like_twoqubits}. The contextuality is quantified as follows: for each state, we compute the angles corresponding to the maximal value of violation of the CHSH inequality. For those angles, we compute the minimal value of the $L_1$-norm associated to the mean values of the observables constructed with those angles. The maximization with regard to the angles is very important, because a given state might be contextual with regards to some observables, but non-contextual with regards to others. In figure \ref{fig:Comparison} we show a comparison between the procedure taking maximal angles (figure \ref{fig:Comparison} right) vs the procedure without maximization (figure \ref{fig:Comparison} right).

\begin{figure}[H]
\centering
\includegraphics[width=9.5cm]{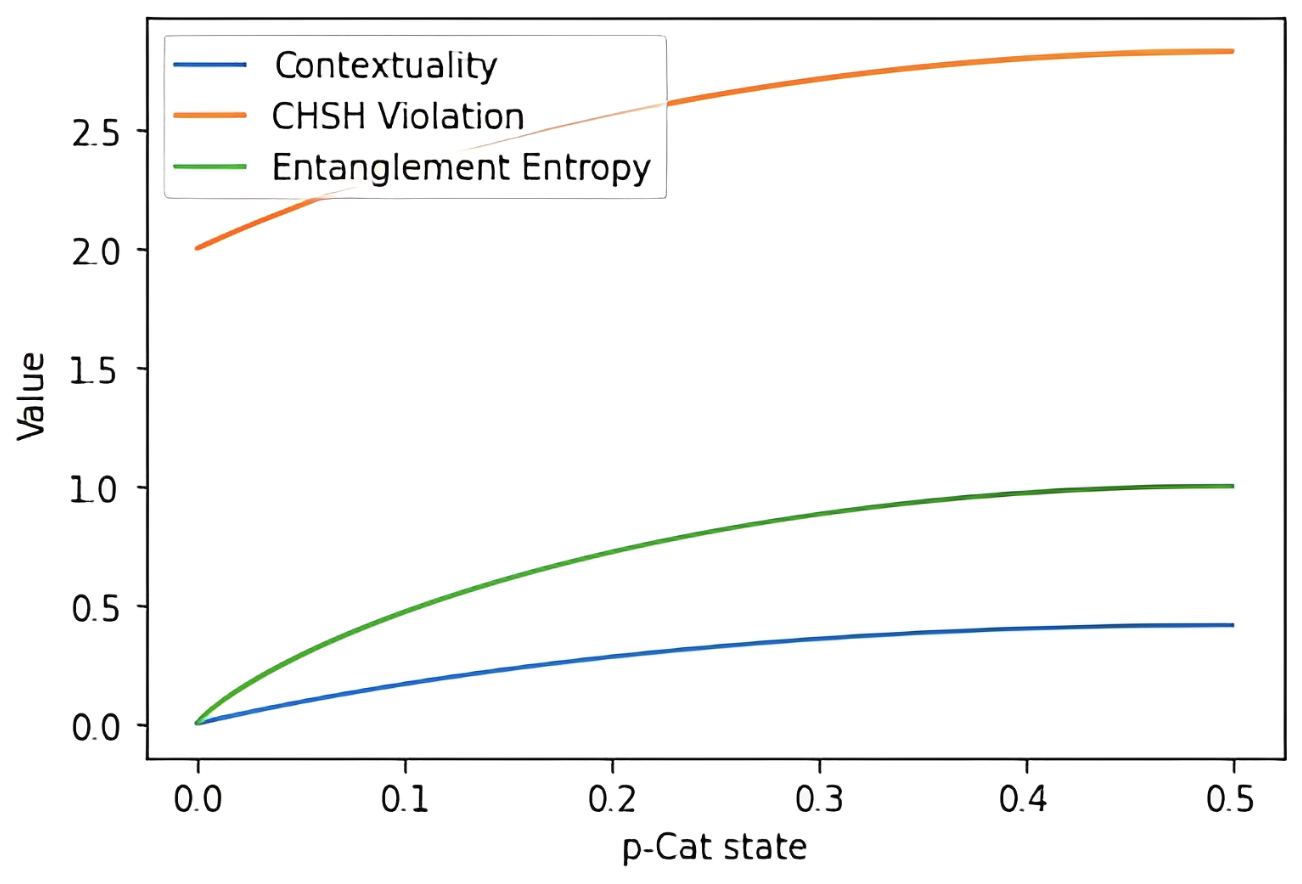}
\caption{Plots of the values of entanglement entropy, contextuality and degree of violation for the cat-like states family, for the case of two qubits.}
\label{fig:Context_Vio_Ent}
\end{figure}

\begin{figure}[H]
\centering
\includegraphics[width=9cm]{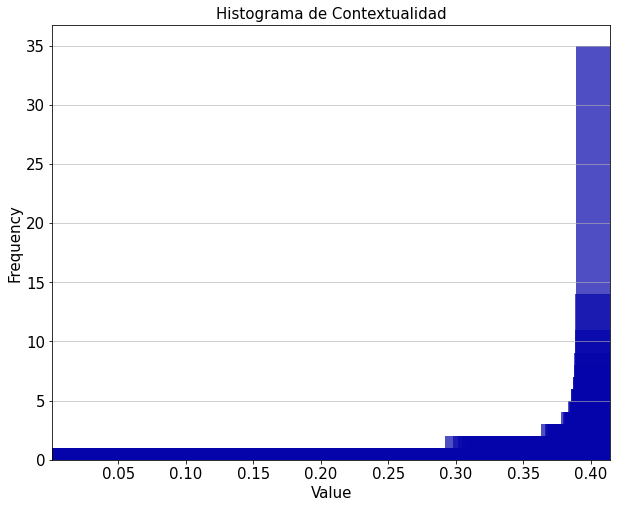}
\caption{Histogram of the contextuality values of the cat-like family of two qubits for $1000$ incraesing values of $p$.}
\label{fig:Histogram_cat_like_twoqubits}
\end{figure}

\begin{figure}[H]
     \centering
     \begin{subfigure}{0.45\textwidth}
         \centering
        \includegraphics[width=\textwidth]{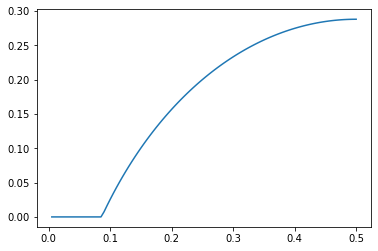}
         \caption{Angles fixed.}
     \end{subfigure}
     \hfill
     \begin{subfigure}{0.45\textwidth}
         \centering
         \includegraphics[width=\textwidth]{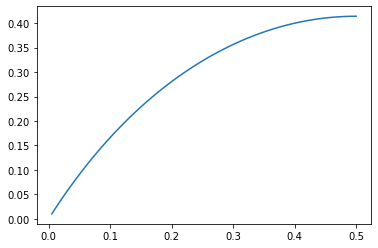}
         \caption{Maximizing angles.}
     \end{subfigure}
              \caption{Maximizing vs not maximizing angles in the CHSH inequality.}
        \label{fig:Comparison}
\end{figure}

\noindent \textbf{Bell-type scenario}

A set of values for the probabilities for the correlations of a Bell-type scenario are displayed in Table \ref{table:Bell} (see \cite{Abramsky_2011}, section $2.6$). These give place to a set of linear equations that can be solved. The minimum $L_{1}$-norm state is taken and this is used to compute the contextuality. For this scenario, the contextuality value obtained is $C = 0.24999999999999245$.

\begin{table}[h]
\centering
\begin{tabular}{|c|c|c|c|c|c|}
\hline
\multicolumn{2}{|c|}{\multirow{2}{*}{\backslashbox{\textcolor{red}{$\mathbf{A}$}}{\textcolor{blue}{$\mathbf{B}$}}}}
 & \multicolumn{2}{c|}{\textcolor{blue}{$\mathbf{a'}$}} &
    \multicolumn{2}{c|}{\textcolor{blue}{$\mathbf{b'}$}} \\
\cline{3-6}
 \multicolumn{2}{|c|}{} & \textcolor{blue}{$+$} & \textcolor{blue}{$-$} & \textcolor{blue}{$+$} & \textcolor{blue}{$-$}\\
\hline
\multirow{2}{*}{\textcolor{red}{$\mathbf{a}$}} & \textcolor{red}{$+$} & $\frac{1}{2}$ & $0$ & $\frac{3}{8}$ & $\frac{1}{8}$\\
\cline{2-6}
 & \textcolor{red}{$-$} & $0$ & $\frac{1}{2}$ & $\frac{1}{8}$ & $\frac{3}{8}$\\
\hline
\multirow{2}{*}{\textcolor{red}{$\mathbf{b}$}} & \textcolor{red}{$+$} & $\frac{3}{8}$ & $\frac{1}{8}$ & $\frac{1}{8}$ &$\frac{3}{8}$ \\
\cline{2-6}
 &\textcolor{red}{$-$} & $\frac{1}{8}$ & $\frac{3}{8}$ & $\frac{3}{8}$ & $\frac{1}{8}$\\
\hline
\end{tabular}
\caption{Bell inequality probabilities.}
\label{table:Bell}
\end{table}




\noindent \textbf{Popescu-Rohrlich box}

The probabilities for the PR box \cite{Popescu1994} are displayed in Table \ref{table:PR}. For this scenario, the contextuality value obtained is maximal (and lies well beyond the quantum limit):  $C=0.9999999999997877$.

\begin{table}[H]
\centering
\begin{tabular}{|c|c|c|c|c|c|}
\hline
\multicolumn{2}{|c|}{\multirow{2}{*}{\backslashbox{\textcolor{red}{$\mathbf{A}$}}{\textcolor{blue}{$\mathbf{B}$}}}}
 & \multicolumn{2}{c|}{\textcolor{blue}{$\mathbf{a'}$}} &
    \multicolumn{2}{c|}{\textcolor{blue}{$\mathbf{b'}$}} \\
\cline{3-6}
 \multicolumn{2}{|c|}{} & \textcolor{blue}{$+$} & \textcolor{blue}{$-$} & \textcolor{blue}{$+$} & \textcolor{blue}{$-$}\\
\hline
\multirow{2}{*}{\textcolor{red}{$\mathbf{a}$}} & \textcolor{red}{$+$} & $\frac{1}{2}$ & $0$ & $\frac{1}{2}$ & $0$\\
\cline{2-6}
 & \textcolor{red}{$-$} & $0$ & $\frac{1}{2}$ & $0$ & $\frac{1}{2}$\\
\hline
\multirow{2}{*}{\textcolor{red}{$\mathbf{b}$}} & \textcolor{red}{$+$} & $\frac{1}{2}$ & $0$ & $0$ &$\frac{1}{2}$ \\
\cline{2-6}
 &\textcolor{red}{$-$} & $0$ & $\frac{1}{2}$ & $\frac{1}{2}$ & $0$\\
\hline
\end{tabular}
\caption{PR box probabilities.}
\label{table:PR}
\end{table}

\noindent \textbf{Mermin correlations}

The probabilities for the Mermin square are displayed in Table \ref{table:Mermin} (see \cite{Three_Mikes_Manifesto}, section $2.4$). For this scenario, the contextuality value obtained is  $C=0.25000000035452974$.

\begin{table}[h]
\centering
\begin{tabular}{|c|c|c|c|c|c|c|c|}
\hline
\multicolumn{2}{|c|}{\multirow{2}{*}{\backslashbox{\textcolor{red}{$\mathbf{A}$}}{\textcolor{blue}{$\mathbf{B}$}}}}
 & \multicolumn{2}{c|}{\textcolor{blue}{$\mathbf{a}'$}} &
    \multicolumn{2}{c|}{\textcolor{blue}{$\mathbf{b}'$} } & \multicolumn{2}{c|}{\textcolor{blue}{$\mathbf{c}'$}} \\
\cline{3-8}
 \multicolumn{2}{|c|}{} & \textcolor{blue}{$+$} & \textcolor{blue}{$-$} & \textcolor{blue}{$+$} & \textcolor{blue}{$-$} & \textcolor{blue}{$+$} & \textcolor{blue}{$-$}\\
\hline
\multirow{2}{*}{\textcolor{red}{$\mathbf{a}$}} & \textcolor{red}{$+$} & $0$ & $\frac{1}{2}$ & $\frac{3}{8}$ & $\frac{1}{8}$ & $\frac{3}{8}$ & $\frac{1}{8}$\\
\cline{2-8}
 & \textcolor{red}{$-$} & $\frac{1}{2}$ & $0$ & $\frac{1}{8}$ & $\frac{3}{8}$ & $\frac{1}{8}$ & $\frac{3}{8}$\\
\hline
\multirow{2}{*}{\textcolor{red}{$\mathbf{b}$}} & \textcolor{red}{$+$} &$\frac{3}{8}$ & $\frac{1}{8}$ & $0$ &$\frac{1}{2}$ & $\frac{3}{8}$ &$\frac{1}{8}$ \\
\cline{2-8}
 &\textcolor{red}{$-$} & $\frac{1}{8}$ & $\frac{3}{8}$ & $\frac{1}{2}$ & $0$& $\frac{1}{8}$ & $\frac{3}{8}$\\
\hline
\multirow{2}{*}{\textcolor{red}{$\mathbf{c}$}} & \textcolor{red}{$+$} &$\frac{3}{8}$ & $\frac{1}{8}$ & $\frac{3}{8}$ &$\frac{1}{8}$ & $0$ &$\frac{1}{2}$ \\
\cline{2-8}
 & \textcolor{red}{$-$} & $\frac{1}{8}$ & $\frac{3}{8}$ & $\frac{1}{8}$ & $\frac{3}{8}$ & $\frac{1}{2}$ & $0$\\
\hline
\end{tabular}
\caption{Mermin square probabilities.}
\label{table:Mermin}
\end{table}










\subsection{Quantum Random Circuits}

\begin{figure}[h!]
     \centering
     \begin{subfigure}{0.45\textwidth}
         \centering
         \includegraphics[width=\textwidth]{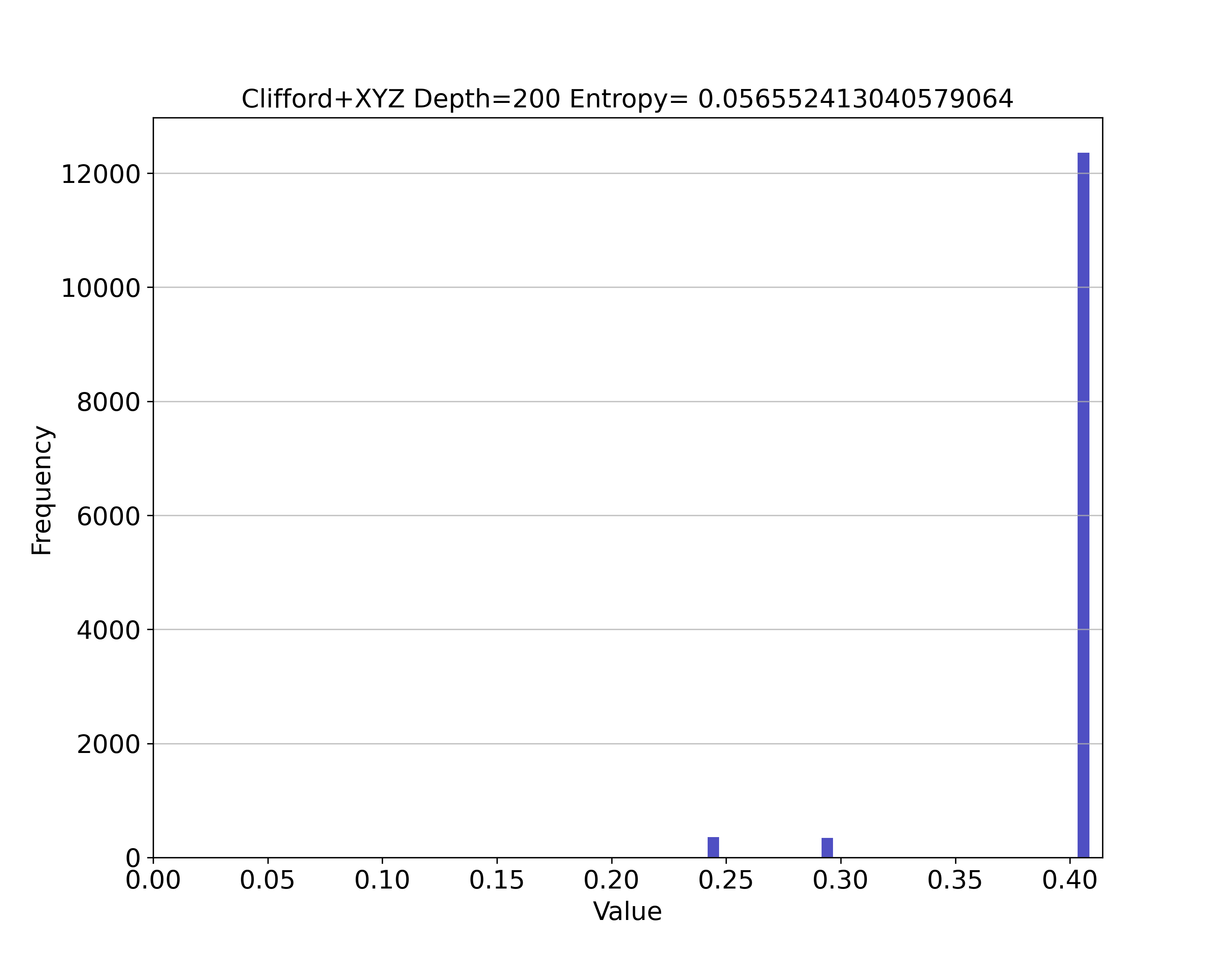}
         \caption{Clifford. Shannon entropy $=$ $0.054$.}
         \label{fig:Comparison_left}
     \end{subfigure}
     \hfill
     \begin{subfigure}{0.45\textwidth}
     \centering
         \includegraphics[width=\textwidth]{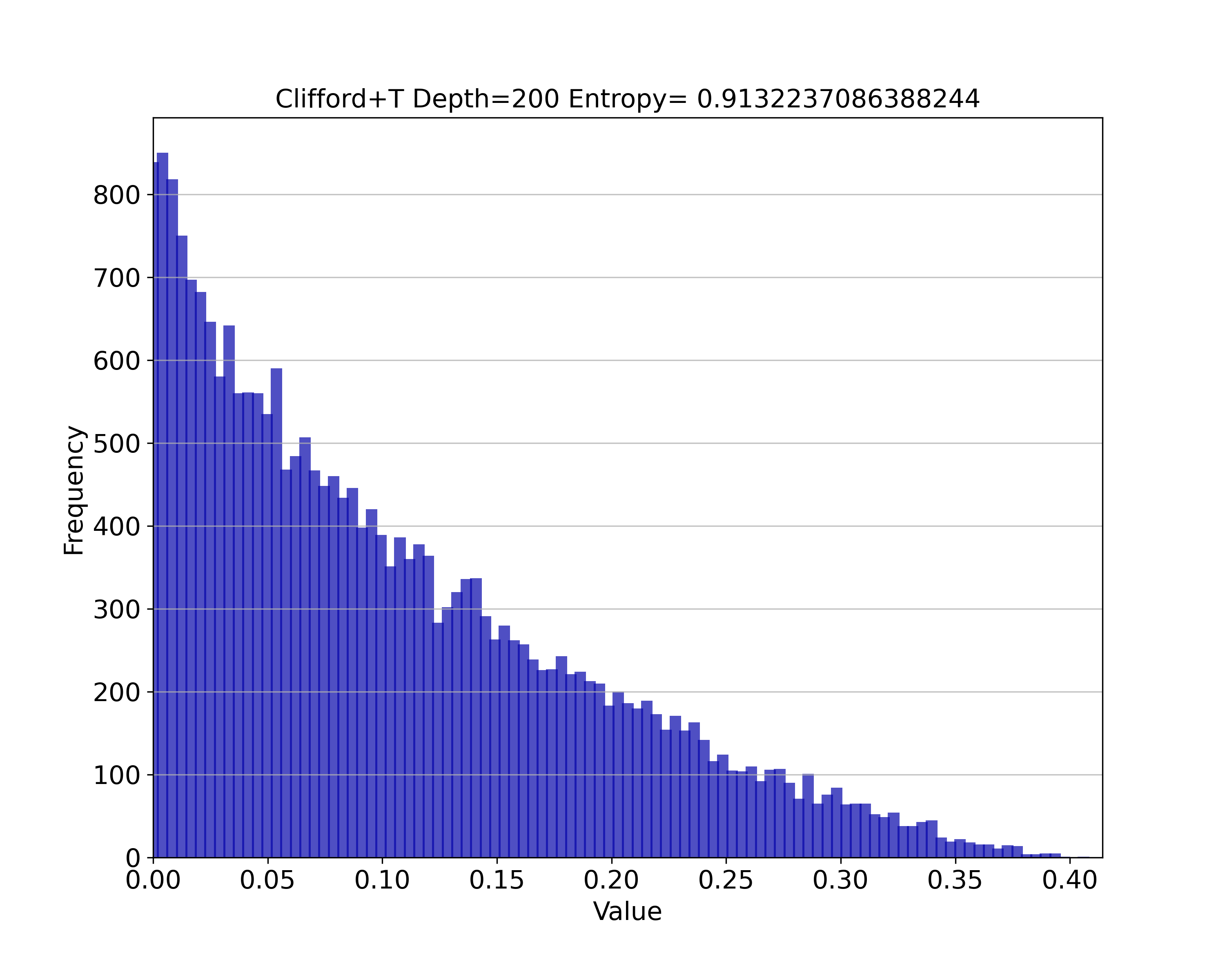}
         \caption{Clifford $+$ $T$. Shannon entropy $=$ $0.91$.}
         \label{fig:Comparison_right}
     \end{subfigure}
     \hfill
     \centering
     \begin{subfigure}{0.45\textwidth}
         \centering
         \includegraphics[width=\textwidth]{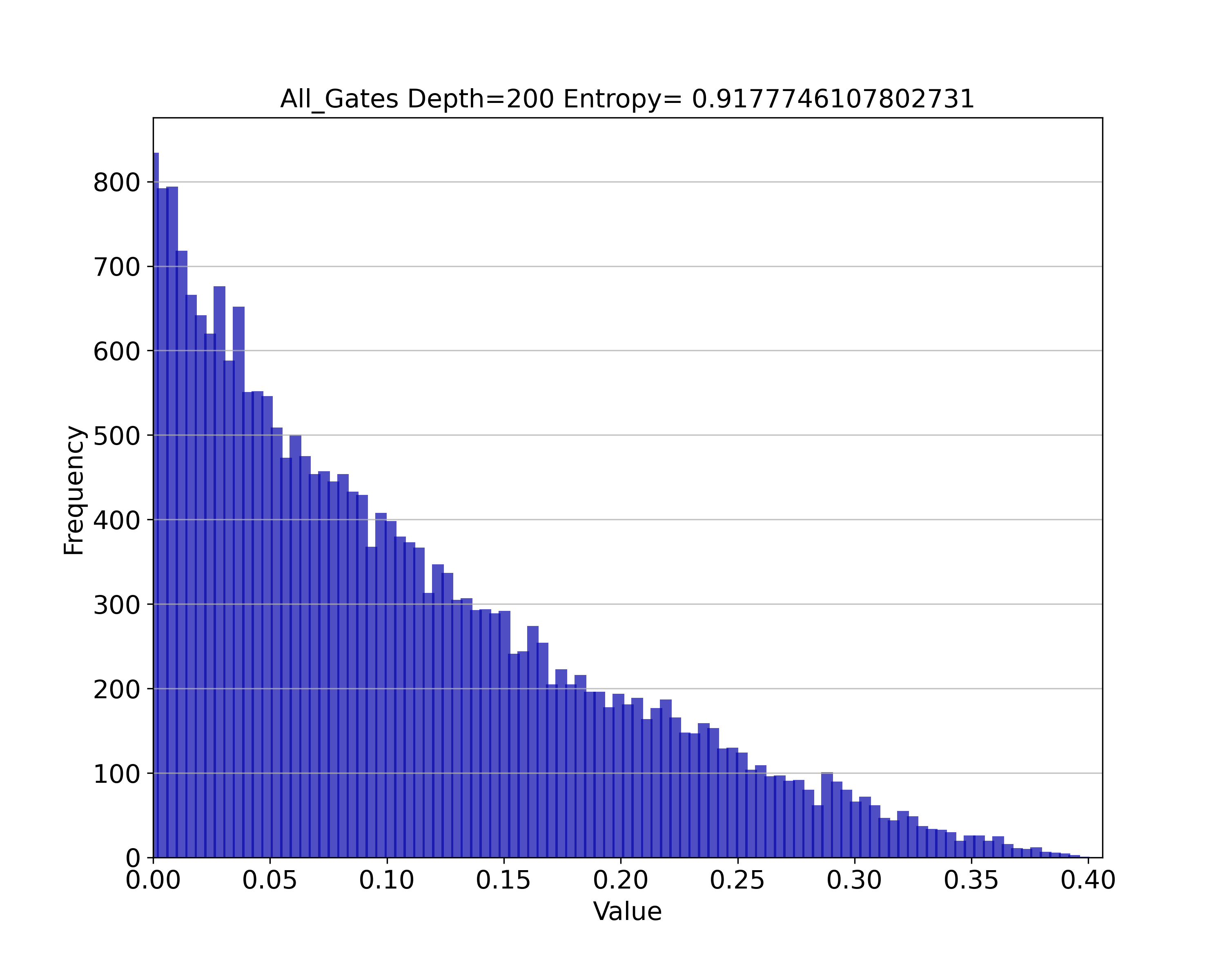}
         \caption{All possible Qiskit gates. Shannon entropy $=$ $0.91$.}
         \label{fig:Comparison_right}
     \end{subfigure}
     \hfill
     \centering
     \begin{subfigure}{0.45\textwidth}
         \centering
         \includegraphics[width=\textwidth]{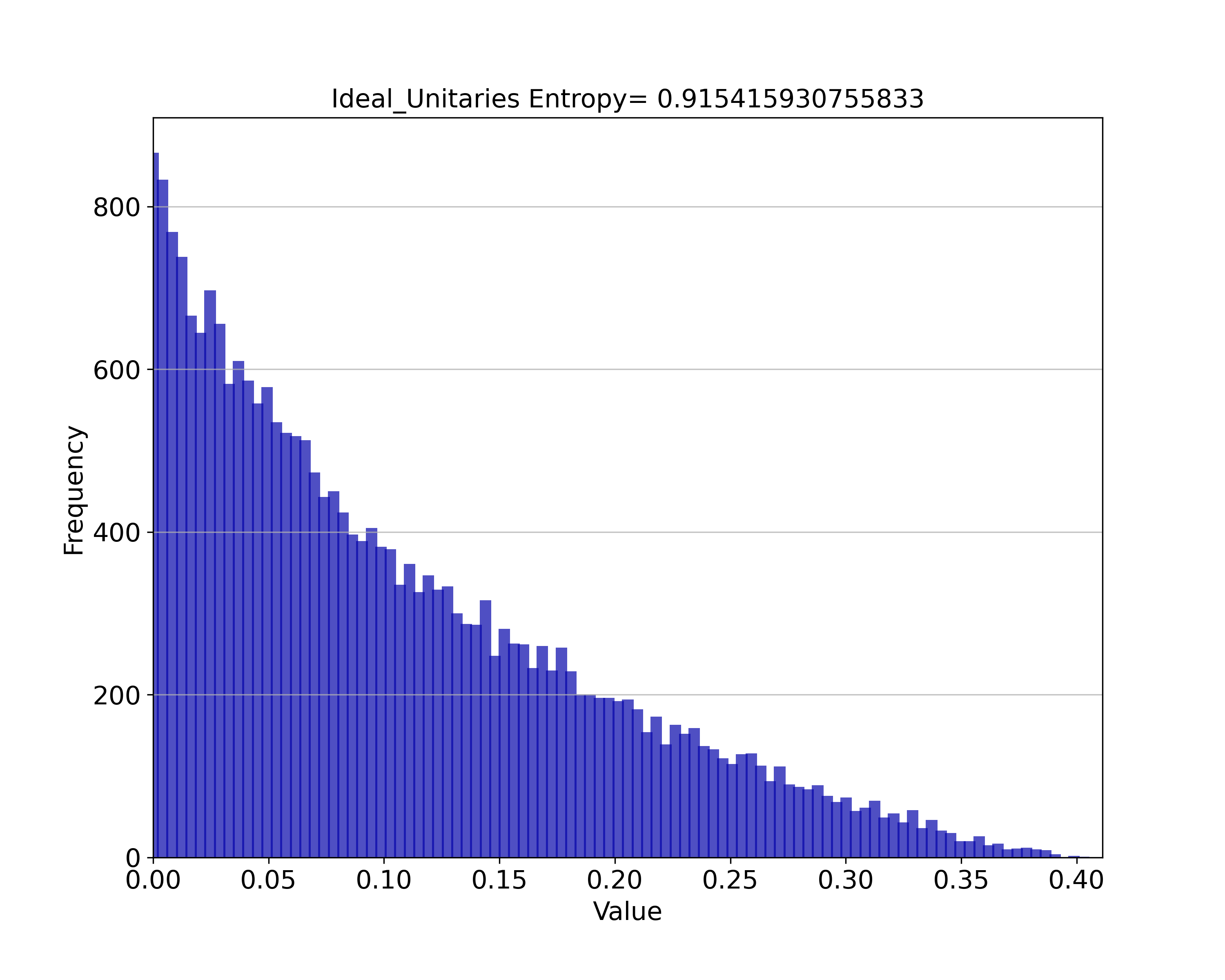}
         \caption{100000 unitary gates generated with the Haar measure. Shannon entropy $=$ $0.91$.}
         \label{fig:Comparison_right}
     \end{subfigure}
     \hfill
              \caption{Probability distributions associated to  the contextuality values for quantum random circuits generated with different sets of elementary gates. Each of the $100.000$ generated circuits has depth$=200$. The zero contextuality pick is not shown, for a better visualization. The probabilities are renormalized accordingly.}
        \label{fig:Comparison_Qiskit}
\end{figure}

\begin{figure}[h!]
     \centering
     \begin{subfigure}{0.45\textwidth}
         \centering
         \includegraphics[width=\textwidth]{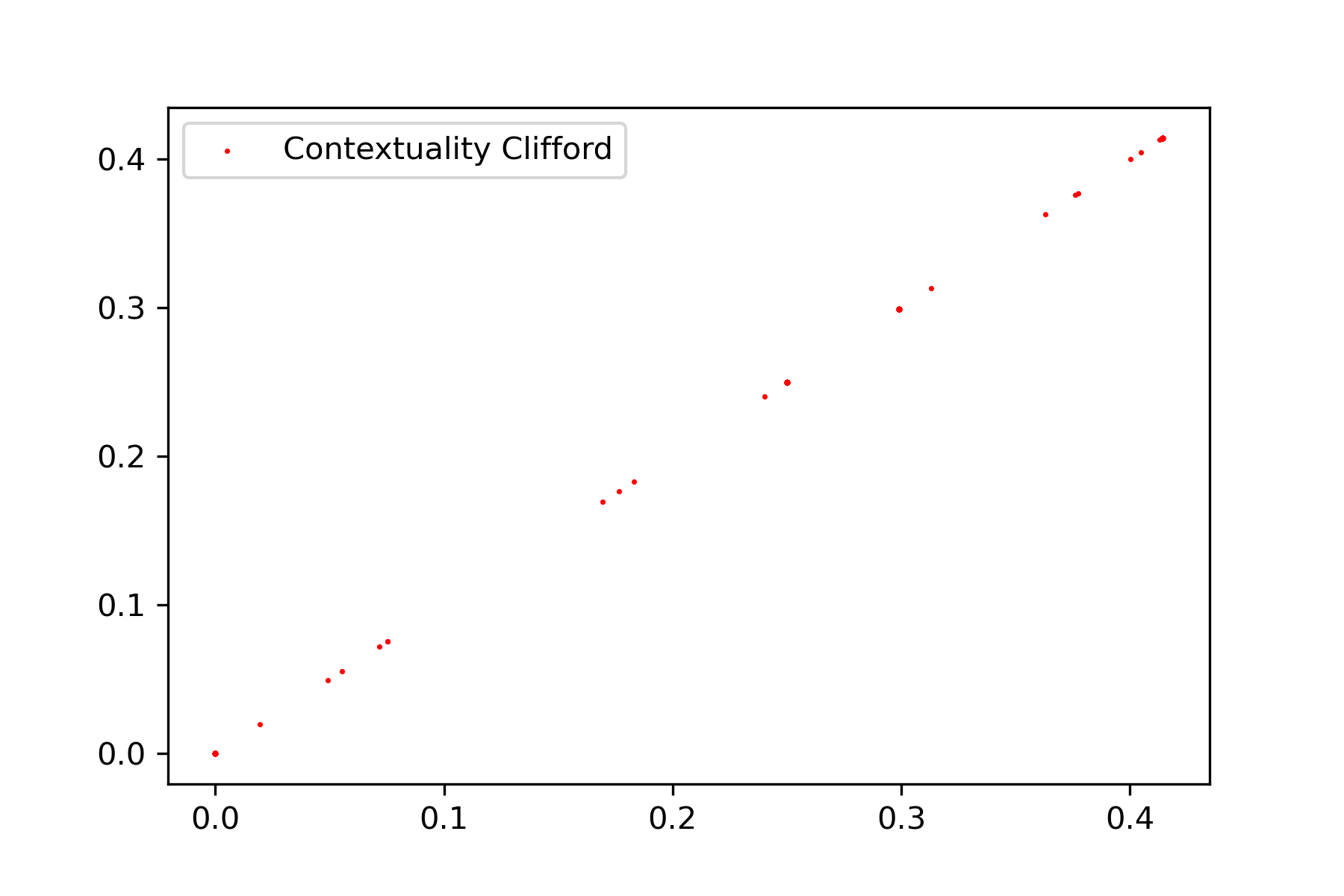}
         \caption{Contextuality values for the Clifford set.}
     \end{subfigure}
     \hfill
     \begin{subfigure}{0.45\textwidth}
     \centering
         \includegraphics[width=\textwidth]{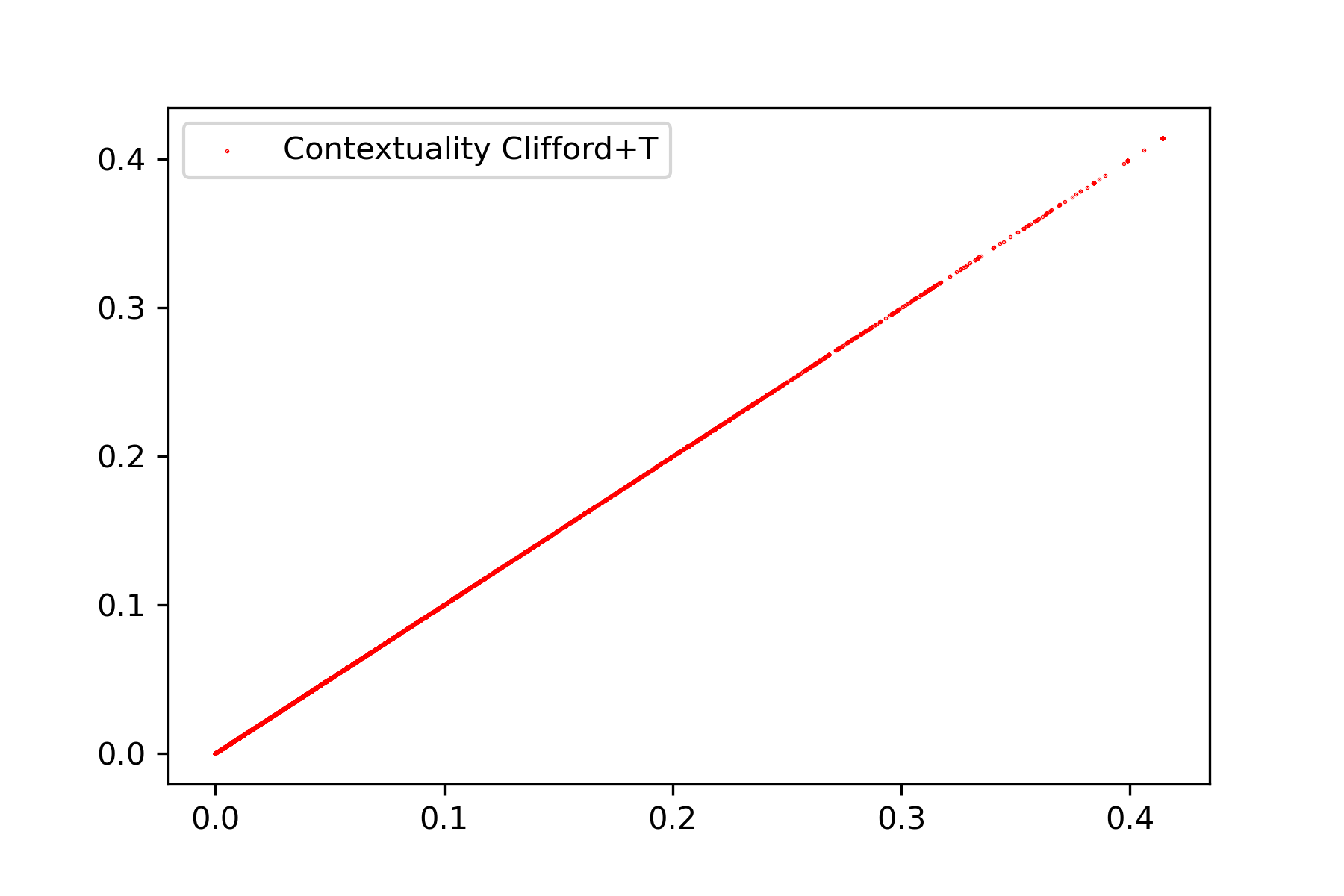}
         \caption{Contextuality values for Clifford $+$ $T$}
     \end{subfigure}
     \hfill
     \centering
     \begin{subfigure}{0.45\textwidth}
         \centering
         \includegraphics[width=\textwidth]{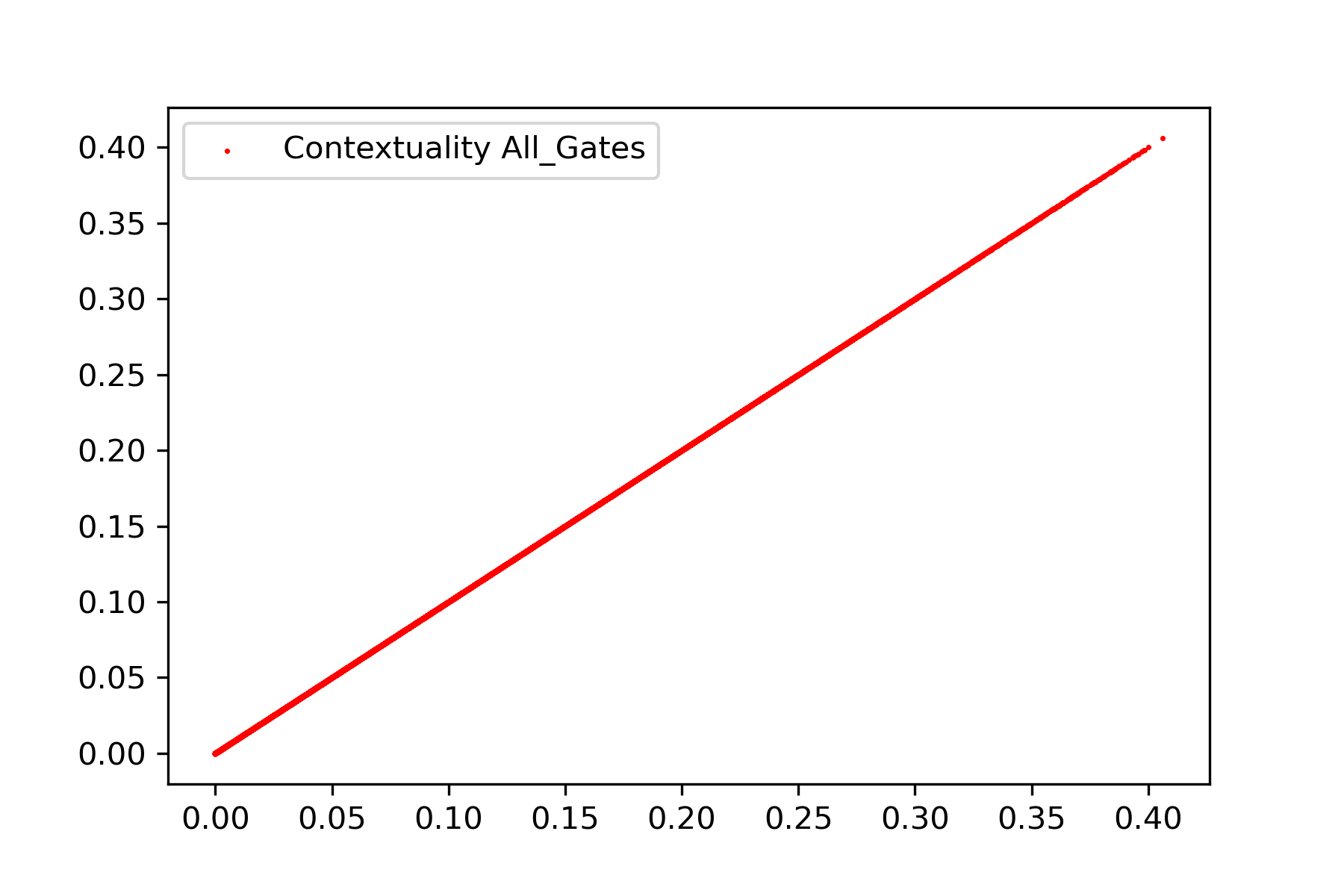}
         \caption{Contextuality values for all possible Qiskit gates.}
     \end{subfigure}
     \hfill
     \centering
     \begin{subfigure}{0.45\textwidth}
         \centering
         \includegraphics[width=\textwidth]{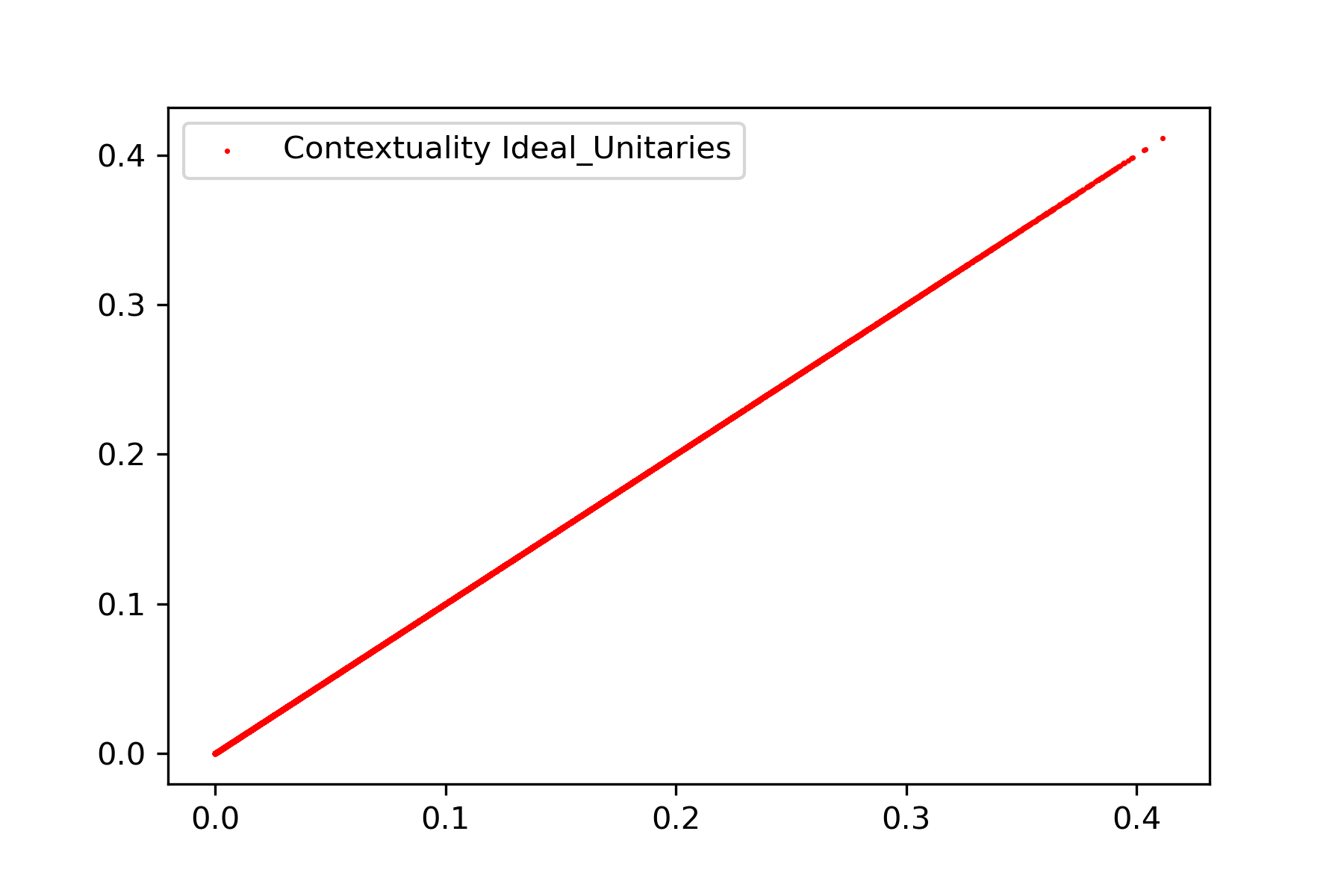}
         \caption{Contextuality values for 100000 unitary gates generated with the Haar measure.}
     \end{subfigure}
     \hfill
              \caption{For each set of gates we plot the contextuality values obtained vs themselves. This illustrates that the set of contextuality values of the non-universal set contains "holes", i.e., certain values are very difficult to approximate.}
       \label{f:Holes}
\end{figure}

At this point, it is interesting to ask: how much contextuality can be produced by a given set of quantum gates? In order to answer this question, we use a modified version of the function \verb"qiskit.circuit.random.random_circuit" of the Qiskit SDK \cite{Qiskit} to generate quantum random circuits with different sets of gates. We compute the contextuality of the states associated to $100000$ randomly generated quantum circuits of depth $200$ for two qubits. The first set is formed by the Clifford gates only. The second, contains also the $T$ gate. Finally, we use all possible gates available in the above mentioned Qiskit function. For completness, we also computed the contextuality of a set of states generated using radom unitaries (using the Python library \verb"SciPy" \cite{SciPy-NMeth}). The vast majority of the states thus generated shows no contextuality. We build histograms for the values obtained for those states with non-null contextuality. The results are depicted in Figure \ref{fig:Comparison_Qiskit}. We show the probability distributions associated to those histograms in order to see how states with non-null contextuality are distributed. As can be clearly seen, the obtained values are more distributed when the set of gates is universal (for example, for Clifford $+$ $T$). In order to quantify this, we compute the Shannon entropy associated to the distributions. We find that the entropy of Clifford $+$ $T$ is around fifteen times bigger than that of Clifford's alone. And that associated to all possible gates has the same order of magnitude than that of Clifford $+$ $T$. The message seems to be that, the richer the set of elementary gates employed, the distribution of the resource is more homogoenous among the quantum states generated. The distribution associated to the non-universal set is very ``picked", reflecting that the states produced do not cover the quantum state space in a reasonable way.

It is interesting to notice that, even if the Clifford set is non-universal, it can generate maximal contextuality (this corresponds to the pick observed on the right in Figure \ref{fig:Comparison_Qiskit} (a)). The reason for this should be intuitively clear: the Clifford set can generate states that maximally violate the CHSH inequalities. And these inequalities constitute, in turn, a contextuality scenario. The Gottesman-Knill theorem affirms that circuits generated by the Clifford set alone can be classically simulated \cite{Gottesman}. But some of the states thus generated are superposed and posses entanglement. For that reason, some authors argue that entanglement and superposition alone cannot be the source of the quantum speed-up (though see \cite{Cuffaro_Gottesman_Knill}). Following a similar reasomning line, our results suggest that contextuality alone --at least when quantified with the measure studied here-- cannot be the reason for the quantum speed-up either. But it seems that there is a clear difference between the distributions associated to universal vs non-universal sets of gates: the former are distributed in a more homogeneous way than the latter. Therefore, these results suggest that the quantum speed up might be related to how rich is the distribution of a resource for the states generated (see also the conceptual discussion presented in \cite{Holik_QP_and_QT_2022}). One might say that the contextuality values generated by the non-universal set display ``holes" in the quantum state space, meaning that some values cannot be reached (see Figure \ref{f:Holes}). Of course, these assertions are not conclusive, and will be studied with more detail in future works. But they show how useful the measure introduced in the previous section can be for a better understanding of quantum information problems.





\section{Conclusions}\label{s:Conclusions}

In this work we have elaborated on previous approaches and provided a very general definition of negative probability. Alike the Wigner function, our proposal does not relies on any Hilbert space structure. Thus, it provides a solid foundation for the description of quantum states based on measure theory, generalizing Kolmogorov's theory in a very natural way. Differently from previous approaches, it allows to describe infinite dimensional models.

For the discrete case, it is possible to use our definition to define a measure of quantum contextuality that can be easily computed numerically. We have used it to compute the contextuality associated to different scenarios. In particular, we computed the contextuality associated to Bell, PR boxes, and Mermin's box scenarios. The example of the Cat-like states illustrates that the contextuality values depend explicitly on the considered scenario. As an example, for Bell-type scenarios, these depend on the orientations of the angles of the spin observables. For that reason, for a given quantum state, we choose the angles that correspond to a maximal violation of the CHSH inequality.

The $L_{1}$-norm based contextuality measure turns out to be particularly useful to study how contextuality, understood as a resource, is distributed among the states generated by different sets of quantum gates. We find that the Clifford set can generate maximal contextuality. Using the Gottesman-Knill theorem, one could say that contextuality---quantified by the measure studied in this work--- on its own cannot be the reason for the quantum speed-up. Quite on the contrary, our results suggest that the main difference between universal vs non-universal sets of elementary gates is that the contextuality values of the former are distributed in a more homogeneous way than the later. This findings open the door for further inquiry, that we will address in future works.

\end{document}